\def\ps@pprintTitle{%
  \let\@oddhead\@empty
  \let\@evenhead\@empty
  \let\@oddfoot\@empty
  \let\@evenfoot\@oddfoot
}
\newtheorem{thm}{Theorem}[section]
\newtheorem{lem}[thm]{Lemma}
\renewcommand{\epsilon}{\varepsilon}
\begin{document}

\begin{frontmatter}

\title{Saddle-node bifurcation of limit cycles in an epidemic model with two levels of awareness}

\author[address1]{David Juher}
\ead{david.juher@udg.edu}

\author[address1]{David Rojas}
\ead{david.rojas@udg.edu}

\author[address1]{Joan Salda\~{n}a\corref{cor1}}
\ead{joan.saldana@udg.edu}

\address[address1]{Departament d'Inform\`{a}tica, Matem\`{a}tica Aplicada i Estad\'istica
\\
Universitat de Girona, Girona 17003, Catalonia, Spain}

\cortext[cor1]{Corresponding author}

\begin{abstract}
In this paper we study the appearance of bifurcations of limit cycles in an epidemic model with two types of aware individuals. All the transition rates are constant except for the alerting decay rate of the most aware individuals and the rate of creation of the less aware individuals, which depend on the disease prevalence in a non-linear way. For the ODE model, the numerical computation of the limit cycles and the study of their stability are made by means of the Poincaré map. Moreover, sufficient conditions for the existence of an endemic equilibrium are also obtained. These conditions involve a rather natural relationship between the transmissibility of the disease and that of awareness. Finally, stochastic simulations of the model under a very low rate of imported cases are used to confirm the scenarios of bistability (endemic equilibrium and limit cycle) observed in the solutions of the ODE model.   

\end{abstract}

\begin{keyword}
epidemic models, awareness, bifurcations, limit cycles, stochastic simulations.
\end{keyword}

\end{frontmatter}



\section{Introduction}\label{intro}

The role of human behaviour has been increasingly considered in epidemiological modelling since the early 2000s \cite{FSJ}. The spread of COVID-19 has highlighted even more its important role in the progress of infectious diseases. Besides institutional measures as mobility restrictions, mandatory use of facemasks, or school closings, self-initiated individual behaviours related to risk aversion are recognized as a driving force in epidemic dynamics \cite{Manrubia, Weitz}.  

One way to model such behavioural changes in deterministic models is to modify the incidence term $\beta S I$ where $\beta$ denotes the rate of disease transmission, and $S$ and $I$ are the number of susceptible and infected individuals, respectively. The simplest way to modify it is by assuming that $\beta$ is no longer constant but a decreasing function of the prevalence of the disease (\cite{Capasso, Ruan, Alexander, Weitz}). In this mean-field formulation of the incidence term, $\beta$ depends on the contact rate as well as on the probability of transmission during an infectious contact. So, its reduction can reflect a diminution in the number of social contacts (social distancing), the adoption of measures to prevent infection while keeping the same contact rate (decrease of the infection probability), or both.

On the other hand, it is well known that the perception of infection risk is uneven among susceptible individuals \cite{Guenther}. One way to introduce some heterogeneity in risk-taking propensity has been to include more types of uninfected individuals characterised by their level of responsiveness to risk. For instance, the Susceptible-Aware-Infectious-Susceptible (SAIS) model considers a new class of non-infected individuals with a higher risk aversion than the susceptible ones, the so-called aware or alerted individuals, who are characterized by a lower transmission rate \cite{Sahneh12a}. 

A basic ingredient in such a modelling approach is the transmission of awareness among individuals \cite{Funk09}. In \cite{JSX} the authors considered an SAIS model where alerted individuals were able to transmit awareness by convincing non-aware individuals to take preventive measures against the infection, which is an example of self-initiated individual behaviour. Moreover, a new class of aware individuals, the so-called unwilling (U) individuals, is also introduced. They are characterized by a lower level of alertness which is translated into a lack of willingness to transmit awareness to susceptible individuals. The existence of this second class of aware individuals turns out to be necessary to have oscillatory solutions of the SAUIS model with no births and deaths in the population.   

The inflow of new susceptible individuals in the population is a key factor in mean-field epidemic models to observe periodic solutions \cite{Ruan}. In dynamic networks models, link dynamics can also play this role \cite{Szabo}. However, even without demographic processes, behaviourally-induced epidemic oscillations can also be expected to occur when individuals experience a decline in awareness as a result of preventive measures taken over long periods of time combined with low disease prevalence. This fact was, indeed, proved in \cite{JSX} by analysing the occurrence of a Hopf bifurcation from an endemic equilibrium of the SAUIS model. Later, the robustness of such oscillations was confirmed in \cite{JRS2} under the assumption of a low rate $\epsilon$ of imported cases (infections contracted from abroad) by means of stochastic simulations on random networks. 

In this paper, we explore an extended version of the SAUIS-$\epsilon$ model in \cite{JRS2} which considers that awareness dynamics changes abruptly when disease prevalence crosses a threshold value $\eta$. Precisely, the rate $\nu_a$ of creation of unwilling individuals and the rate of awareness decay $\delta_a$ are modulated by the following function $\sigma_m(i)$ of the fraction $i$ of infected individuals:
$$
\sigma_m(i) = \frac{1}{1+(i/\eta)^m}, \quad m \ge 1.
$$ 
The sharpness of the reduction of these two rates is controled by the parameter $m$, while $\eta$ is the half-saturation constant (see Figure \ref{fig:sigma}). In particular, for $m \gg 1$, $1-\sigma_m(i)$ becomes closer to the unit step function $\theta(i-\eta)$. An extreme example of such abrupt behavioural responses could be the occurrence of panic waves when new cases of an emerging disease appear in a population.  

In contrast to other papers where awareness is considered in terms of non-constant infection transmission rates (see, for instance, \cite{Alexander, Capasso, Liu, Ruan, Weitz}), here we will focus on the awareness dynamics themselves and their role in the appearance of periodic solutions (oscillatory epidemics). In particular, we are interested in how the behaviour of solutions is affected by the reduction of both the decay of awareness and the creation of unwilling individuals.   

\section{SAUIS-$\epsilon$ model with varying coefficients}

Each individual in a population can be in one of the following four states: S (susceptible), A (aware), U (unwilling), and I (infected). The model assumes that aware individuals are created at alerting rates $\alpha_i$ and $\alpha_a$ from susceptible ones after being in contact with infected and aware individuals, respectively. Aware individuals experience an alerting decay and become unwilling at a rate $\delta_a$, while unwilling individuals also appear at rate $\nu_a$ from contacts between susceptibles and aware individuals (nodes) and they become susceptible at a rate $\delta_u$. The infection transmission rates for susceptible, aware, and unwilling individuals are $\beta$, $\beta_a$, and $\beta_u$, respectively, while the recovery rate from infection is $\delta$.     

Moreover, following the SAUIS-$\epsilon$ model introduced in \cite{JRS2}, we consider the arrival of imported cases (infections contracted abroad) at a very low rate $\epsilon > 0$. This fact prevents stochastic epidemic oscillations from extinction and, at the same time, the dynamical properties of the solutions remain close to those of the deterministic ODE model with $\epsilon=0$. 

Finally, as explained in the introduction, we assume that the rate of awereness decay $\delta_a$ and the rate $\nu_a$ at which susceptible hosts become unwilling due to a contact with an aware host both depend on the fraction of infected individuals through a reduction factor given by the function $\sigma_m(i)$. The resulting ODE system governing the epidemic dynamics is then given by:
\begin{equation}\label{eqn:SAUIS}
\begin{split}
\frac{da}{dt} & =  \alpha_i \, s \, i + \alpha_a \, s \, a  -  \beta_a a \, i - \delta_a\,\sigma_m(i)\,a - \epsilon a, \quad \beta_a < \beta, \\
\frac{du}{dt} & =  \delta_a\,\sigma_m(i)\,a  + \nu_a\,\sigma_m(i)\, s \, a -  \beta_u u \, i - \delta_u u - \epsilon u, \quad \beta_u < \beta, \\
\frac{di}{dt} & =  (\beta  \, s + \beta_a a + \beta_u u - \delta) i + (1-i)\epsilon, \qquad s+a+u+i = 1,
\end{split}
\end{equation}
where $s$, $a$, $u$, and $i$ denote the fractions of hosts in the $S$, $A$, $U$, and $I$ compartments, respectively. The differential equation for $s$ has been omitted because it is redundant. 

\begin{figure}
    \centering
    \includegraphics[scale=0.55]{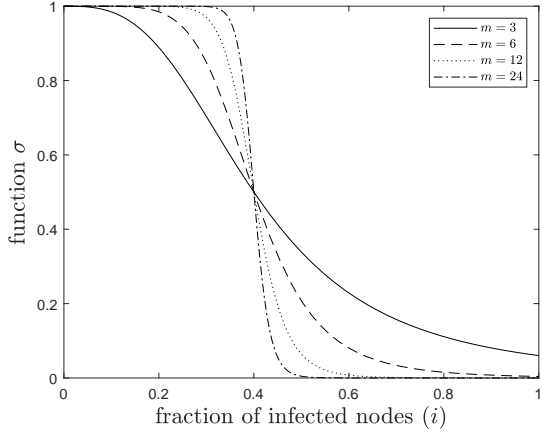}
    \caption{Shape of function $\sigma_m(i)$ with $\eta=0.4$ for different values of $m$.}
    \label{fig:sigma}
\end{figure}

\section{Equilibria}

The natural state space of system \eqref{eqn:SAUIS} is $\Omega := \{ (a,u,i) \in \mathbb{R}^3 : 0 \le a+u+i \le 1 \}$. The existence of imported cases from abroad guarantees that the vector field defined by this system on the boundary of $\Omega$ points strictly towards its interior. In particular, this implies that $\Omega$ is positively invariant under the flow defined by the solutions of system \eqref{eqn:SAUIS} and, moreover, the non-existence of disease-free equilibria for this model. 

On the other hand, since $\sigma_m(0)=1$, the same analysis of the bifurcations from the two disease-free equilibria (DFE) of the model with $\epsilon=0$ done in \cite{JRS2} works for our system. For instance, taking $\epsilon$ as a bifurcation parameter, it follows that one interior equilibrium of \eqref{eqn:SAUIS} comes from the bifurcation of a DFE of the system with $\epsilon=0$ when $\beta < \delta$. Precisely, either the DFE $\mathbf{e}^*_1=(0,0,0)$ enters $\Omega$ for $\epsilon>0$ if $\alpha_a < \delta_a$, or the DFE $\mathbf{e}^*_2=(a^*_0,u^*_0,0)$ with $a^*_0 = \delta_u\left(1-\delta_a/\alpha_a\right)/\left(\delta_a\left(1+\nu_a/\alpha_a\right)+\delta_u\right)$ and $u^*_0=\delta_a/\delta_u\left(1+\nu_a/\alpha_a\right)a^*_0$ 
enters $\Omega$ for $\epsilon>0$ if $\alpha_a > \delta_a$. In both cases, the interior equilibrium of system \eqref{eqn:SAUIS} bifurcates from an asymptotically stable DFE and is only maintained by the presence of imported cases. So, such an equilibrium is not a proper endemic equilibrium because it does not result from the disease transmission within the population.

For $\beta > \delta$ and taking $\beta_a$ as a bifurcation parameter, it follows that $\mathbf{e}^*_2$ is still asymptotically stable if $\beta_a < \beta_a^c:=\beta-\left(\beta-\delta-(\beta-\beta_u)u^*_0\right)/a^*_0$. In this case, an interior equilibrium fed by the imported cases bifurcates from it. So, from now on we will assume that $\beta > \delta$ and $\beta_a>\beta_a^c$ to guarantee that no interior equilibrium for $\epsilon > 0$ arises from a DFE with $\epsilon = 0$ and, hence, that any interior equilibrium corresponds to the perturbation of an endemic equilibrium of the system with $\epsilon=0$. 

Endemic equilibria are, in general, very difficult to determine analytically. When $\epsilon=0$ we can easily see that any endemic equilibrium lies inside the plane
\begin{equation}\label{plane}
1-\frac{\delta}{\beta}-\left(1-\frac{\beta_a}{\beta}\right)a-\left(1-\frac{\beta_u}{\beta}\right)u - i = 0.
\end{equation}
The following result gives sufficient conditions for the existence of at least one endemic equilibrium point of the model~\eqref{eqn:SAUIS} with $\epsilon=0$. The proof relies on a version of the Poincaré-Miranda theorem in a triangular domain, which we include in the Appendix for completeness.

\begin{lem}\label{equilibrium}
System~\eqref{eqn:SAUIS} with $\epsilon=0$ has an endemic equilibrium point in $\Omega$ if $0\leq \beta_a<\beta_u<\delta<\beta$ and $\frac{\alpha_a}{\delta_a}<\frac{\beta-\beta_a}{\delta-\beta_a}$.
\end{lem}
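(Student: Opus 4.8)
The plan is to fix $\epsilon=0$, reduce the search for an endemic equilibrium to a two-dimensional zero problem on a triangle, and apply the triangular Poincaré–Miranda theorem of the Appendix. Since an endemic equilibrium has $i>0$, the third equation of \eqref{eqn:SAUIS} forces $\beta s+\beta_a a+\beta_u u-\delta=0$, so the equilibrium lies in the plane \eqref{plane}. I would use $(a,u)$ as coordinates on this plane, writing $s=(\delta-\beta_a a-\beta_u u)/\beta$ and taking $i$ as given by \eqref{plane}, and work on the triangle $T$ with vertices $O=(0,0)$, $P_a=\bigl(\tfrac{\beta-\delta}{\beta-\beta_a},0\bigr)$ and $P_u=\bigl(0,\tfrac{\beta-\delta}{\beta-\beta_u}\bigr)$. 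A short computation using $\beta_a,\beta_u<\delta<\beta$ shows that on $T$ one has $a,u,i\ge 0$ (with $i=0$ exactly on the edge $P_aP_u$) and $s>0$, so $T$ corresponds to a subset of $\Omega$; consequently a zero of $G:=(F_1,F_2)$ in the interior of $T$ — where $F_1,F_2$ denote the right-hand sides of the $a$- and $u$-equations after this substitution — is precisely an endemic equilibrium of \eqref{eqn:SAUIS} with $\epsilon=0$, because the $s$-equation is redundant and $di/dt=i\cdot 0=0$ there.

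Next I would read off the signs of $F_1$ and $F_2$ on the three edges of $T$. On $a=0$: $F_1=\alpha_i\,s\,i\ge 0$ and $F_2=-u(\beta_u i+\delta_u)\le 0$. On $u=0$: $F_2=\sigma_m(i)\,a(\delta_a+\nu_a s)\ge 0$, and $F_1>0$ at $O$. On $i=0$ (where $\sigma_m\equiv 1$): $F_1=a(\alpha_a s-\delta_a)$ and $F_2=a(\delta_a+\nu_a s)-\delta_u u$, the latter being positive at $P_a$ and negative at $P_u$. The two hypotheses are used precisely to control $F_1$ on the edge $i=0$: at $P_a$ one has $s=\tfrac{\delta-\beta_a}{\beta-\beta_a}$, and $F_1(P_a)<0$ is equivalent to $\tfrac{\alpha_a}{\delta_a}<\tfrac{\beta-\beta_a}{\delta-\beta_a}$; moreover $s$ is affine along $P_aP_u$ with $s(P_a)\ge s(P_u)$ because $\beta_a<\beta_u$, so $\alpha_a s-\delta_a\le\alpha_a s(P_a)-\delta_a<0$ along the whole edge, giving $F_1\le 0$ on $i=0$ with equality only at $P_u$.

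Finally I would feed the resulting sign pattern into the triangular Poincaré–Miranda theorem: $F_1>0$ at the vertex $O$ and $F_1\le 0$ on the opposite edge $P_aP_u$, while $F_2\ge 0$ on the edge $OP_a$ and $F_2\le 0$ on the edge $OP_u$ (consistently at $O$, where $F_2=0$). This produces a common zero $(a^\star,u^\star)$ of $F_1$ and $F_2$ in $T$. The strict forms of the edge estimates — $F_2<0$ on $OP_u\setminus\{O\}$, $F_2>0$ on $OP_a\setminus\{O\}$, $F_1(O)>0$, and $F_1<0$ on $P_aP_u\setminus\{P_u\}$ together with $F_2(P_u)\ne 0$ — exclude $\partial T$, so the zero lies in the interior of $T$, where $s>0$ and $i>0$, and $(a^\star,u^\star,i^\star)$ is the desired endemic equilibrium.

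The main difficulty is the boundary sign analysis, not the topological step: $F_2$ vanishes at the vertex $O$ and $F_1$ genuinely changes sign along the edge $u=0$, so the configuration does not fit a naive rectangular Poincaré–Miranda, and one must choose $T$ and the roles of $F_1$ and $F_2$ with care. Within that analysis the one nontrivial inequality is $F_1\le 0$ on the hypotenuse $i=0$, and the hypotheses $\beta_a<\beta_u<\delta<\beta$ and $\tfrac{\alpha_a}{\delta_a}<\tfrac{\beta-\beta_a}{\delta-\beta_a}$ are exactly what is needed to secure it.
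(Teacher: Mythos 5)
Your proposal is correct and follows essentially the same route as the paper: restrict to the plane \eqref{plane}, work on the triangle with vertices $(0,0)$, $\bigl(\tfrac{\beta-\delta}{\beta-\beta_a},0\bigr)$, $\bigl(0,\tfrac{\beta-\delta}{\beta-\beta_u}\bigr)$, establish the same sign pattern of the reduced right-hand sides on the three edges (with the hypothesis $\tfrac{\alpha_a}{\delta_a}<\tfrac{\beta-\beta_a}{\delta-\beta_a}$ used exactly where the paper uses it, to force $f_1<0$ on the hypotenuse), and invoke the triangular Poincaré--Miranda theorem of the Appendix. The only differences are presentational: you argue the hypotenuse inequality via monotonicity of $s$ along that edge rather than by locating the roots of the explicit quadratic, and you are somewhat more explicit about the vertices where the components vanish.
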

\begin{proof}
Substituting~\eqref{plane} into the first and second equations of~\eqref{eqn:SAUIS} we obtain two continuous functions in the variables $(a,u)$, $f_1$ and $f_2$, respectively. We find endemic equilibria in the common zeros of $f_1(a,u)$ and $f_2(a,u)$. The intersection of the plane~\eqref{plane} with $\Omega$ projected to the $(a,u)$-plane is the right triangle with vertex $(0,0)$, $(\tfrac{\beta-\delta}{\beta-\beta_a},0)$ and $(0,\tfrac{\beta-\delta}{\beta-\beta_u})$. Notice that $\tfrac{\beta-\delta}{\beta-\beta_a},\tfrac{\beta-\delta}{\beta-\beta_u}<1$ since $\beta_u,\beta_a<\delta$. The hypothenusa of the triangle is given by substituting $i=0$ in the equation~\eqref{plane}.

On the one hand, $f_1(0,u)=\frac{\alpha_i\beta_u(\beta-\beta_u)}{\beta^2}u^2-\frac{\alpha_i(\beta\beta_u+\beta\delta-2\beta_u\delta)}{\beta^2}u+\frac{\alpha_i\delta(\beta-\delta)}{\beta^2}$ is positive for $u\in[0,\frac{\beta-\delta}{\beta-\beta_u})$ and vanishes at $u=\frac{\beta-\delta}{\beta-\beta_u}$. On the hypothenusa, $f_1$ takes the value
\[
f_1(a,\tfrac{\beta-\delta-(\beta-\beta_a)a}{\beta-\beta_u})=\frac{a}{\beta-\beta_u}(\alpha_a(\beta_u-\beta_a)a+\alpha_a(\delta-\beta_u)-\delta_a(\beta-\beta_u))
\]
which vanishes at $a=0$ and $a=\frac{\delta_a(\beta-\beta_u)-\alpha_a(\delta-\beta_u)}{\alpha_a(\beta_u-\beta_a)}$. Elementary computations show that this second root is greater than $\frac{\beta-\delta}{\beta-\beta_a}$ if and only if
\[
(\beta-\beta_u)(\delta_a(\beta-\beta_a)-\alpha_a(\delta-\beta_a))>0,
\]
which follows from the hypotheses. Thus $f_1<0$ on the hypothenusa (note that the coefficient of $a^2$ is strictly positive since $\beta_u > \beta_a$).

On the other hand, $f_2(0,u)=\frac{\beta_u(\beta-\beta_u)}{\beta}u^2-\frac{(\beta-\delta)\beta_u+\beta\delta_u}{\beta}u$ is negative for all $u\in(0,\frac{\beta-\delta}{\beta-\beta_u})$ and vanishes at $u=0$, and 
\[
f_2(a,0)= \frac{\sigma_m(i)}{\beta} \left(-\beta_a\nu_a a^2+(\beta\delta_a+\delta\nu_a)a\right)
\]
vanishes at $a=0$ and $a=\frac{\beta\delta_a+\delta\nu_a}{\beta_a\nu_a}>1$, so $f_2(a,0)>0$ for all $a\in(0,\frac{\beta-\delta}{\beta-\beta_a})$. 

Therefore, $f(a,u):=(f_1(a,u),f_2(a,u))$ satisfies the assumptions of Theorem~\ref{thm:poincare-miranda} (see Appendix) and there exists a common zero of $f_1(a,u)$ and $f_2(a,u)$ inside the triangle formed by the plane~\eqref{plane} inside $\Omega$, which corresponds to an endemic equilibrium.
\end{proof}

Note that, if $\frac{\alpha_a}{\delta_a} < \frac{\beta}{\delta}$ and the first hypothesis of the lemma is fulfilled, then the second one is guaranteed because $g(x)=\frac{\beta-x}{\delta-x}$ is an increasing funtion of $x$ and, hence, we have
$$
\frac{\alpha_a}{\delta_a} < \frac{\beta}{\delta} < \frac{\beta-\beta_a}{\delta-\beta_a}. 
$$
In other words, if the basic reproduction number of the disease in an awareness-free population, $\beta/\delta$, is larger than $1$ and, also, is larger than that of the awareness transmission in a wholly susceptible population, $\alpha_a/\delta_a$, then the existence of an endemic equilibrium for $\epsilon = 0$ is guaranteed. 

With respect to the endemic equilibria of system~\eqref{eqn:SAUIS} for $\epsilon>0$, we know that any equilibrium whose existence is guaranteed by Lemma~\ref{equilibrium} will persist inside $\Omega$ for $\epsilon>0$ small enough under the classical transversal condition thanks to the implicit function theorem.

\section{Saddle-node bifurcation of limit cycles}

A bifurcation that passes from the exhibition of two hyperbolic limit cycles of different stability in the phase portrait to the absence of such limit cycles through their collision in a non-hyperbolic semistable limit cycle is called a \emph{saddle-node bifurcation of limit cycles}. A classic scenario in the plane where this bifurcation appears is through a bistable configuration, when a stable equilibrium point is surrounded by an also stable limit cycle. Therefore a second limit cycle, in this case unstable, separates the basins of attraction of both stable objects. A continuous dependence on parameters may cause the two limit cycles collide and initial solutions that were attracted to the oscillatory motion are then attracted to the equilibrium. This bifurcation is not exclusive of the plane, although for greater dimensions the existence of the equilibrium is not required.

In order to find limit cycles and to determine their stability, the \emph{Poincaré map} is used. In three-dimensional vector fields, as it is the case for the model under consideration, the Poincaré map (also known as \emph{first-return map}) is a two-dimensional discrete map from a plane transversal to the flow of the system located near the periodic orbit to itself. The image by the Poincaré map of each point on the transversal plane is the next intersection point of the flow on the plane. Fixed points of the Poincaré map correspond to limit cycles and the stability of such fixed points gives the stability of the periodic orbit. 

\subsection{Numerical computation of limit cycles and its stability}

Analytic treatment of bifurcations involving limit cycles are only available when the expression of the periodic orbit is known as a function of the parameter producing the bifurcation. Since such expression is usually not computable in applications, numerical methods are the common technique for the detection of saddle-node bifurcations of limit cycles. 

Consider a system of differential equations  $\dot{\bm{x}}=\bm{f}(\bm{x})$ for $\bm{x}\in\mathbb{R}^n$ and the associated flow $\bm{\varphi}_t(\bm{x})$, $t \in \mathbb{R}$. Consider a hypersurface $\Sigma$ of $\mathbb{R}^n$ transversal to the vector field and assume there exists $\bm{x}_0\in\Sigma$ such that $\bm{\varphi}_{T_0}(\bm{x}_0)\in\Sigma$ for some minimal time $T_0>0$. Continuous dependence on initial conditions of the system provides the existence of a neighbourhood $U$ of $\bm{x}_0$ and a function $T:U\rightarrow \mathbb{R}$ such that $\bm{\varphi}_{T(\bm{x})}(\bm{x})\in\Sigma$ for all $\bm{x}\in U$, the so-called time-return map. The map $\bm{P}:\Sigma\rightarrow\Sigma$ defined by
$\bm{P}(\bm{x}) = \bm{\varphi}_{T(\bm{x})}(\bm{x})$
is the Poincaré map or first-return map of the section $\Sigma$.

Fixed points of $\bm{P}$ correspond to limit cycles of the system. So, in order to locate them, the Newton-Raphson's method can be applied to the distance function $\bm{F}(\bm{x})=\bm{P}(\bm{x})-\bm{x}$ near the limit cycle we wish to locate. The iterative procedure
\[
\bm{x}_{n+1} = \bm{x}_n- D\bm{F}(\bm{x}_n)^{-1}\bm{F}(\bm{x}_n),
\]
where $D\bm{F}$ denotes the Jacobian matrix of $\bm{F}$ and $\bm{x}_n\in\Sigma$, produces successively better approximations of the periodic orbit. Rather than computing the inverse of the Jacobian matrix, the usual and more numerically stable procedure is to solve the linear system
\begin{equation}\label{newton}
D\bm{F}(\bm{x}_n)(\bm{x}_{n+1}-\bm{x}_n)=-\bm{F}(\bm{x}_n).
\end{equation}
Newton-Raphson's method is an effective way to find isolated periodic orbits as long as the Jacobian matrix $D\bm{F}$ is non-singular. In order to integrate the solution we use Runge-Kutta's method RK45. At the same time the solution $\bm{\varphi}_t(\bm{x})$ is integrated, we also integrate the variational equation
\[
\dot{\bm{Y}} = D\bm{f}(\bm{\varphi}_t(\bm{x}))\bm{Y}, \; \bm{Y}(0)=I_n
\]
obtaining the monodromy matrix $D\bm{\varphi}_{T(\bm{x})}(\bm{x})$. Therefore the differential of the Poincaré map can be computed as
\[
D\bm{P}(\bm{x}) = \frac{d}{d t} \left( \varphi_{T(x)}(\bm{x}) \right) + D\varphi_{T(\bm{x})}(\bm{x}) = \bm{f}(\bm{P}(\bm{x})) DT(\bm{x}) + D\varphi_{T(\bm{x})}(\bm{x}).
\]
If the hypersurface $\Sigma$ is defined by $\{g(\bm{x})=0\}$ and it is traversed from $\{g(\bm{x})<0\}$ to $\{g(\bm{x})>0\}$, by implicit derivation of $g(\bm{x})=0$ we can find the differential of the time-return map
\[
DT(\bm{x})=-\frac{Dg(\bm{P}(\bm{x}))D\bm{\varphi}_{T(\bm{x})}(\bm{x})}{Dg(\bm{P}(\bm{x}))\bm{f}(\bm{P}(\bm{x}))}.
\]
Then the Jacobian matrix of the Poincaré map can be written as
\[
D\bm{P}(\bm{x}) = -\bm{f}(\bm{P}(\bm{x}))\frac{Dg(\bm{P}(\bm{x}))D\bm{\varphi}_{T(\bm{x})}(\bm{x})}{Dg(\bm{P}(\bm{x}))\bm{f}(\bm{P}(\bm{x}))}+D\bm{\varphi}_{T(\bm{x})}(\bm{x}).
\]
Finally, $D\bm{F}=D\bm{P}-I$ and the iterative procedure \eqref{newton} can be used to locate limit cycles (we refer the reader to \cite{Mondelo} for more details). Once the limit cycle is located, the eigenvalues of the monodromy matrix $D\bm{\varphi}_{T(\bm{x})}(\bm{x})$ at the limit cycle, the so-called Floquet characteristic multipliers, give the stability of the limit cycle found.

\begin{figure}
\begin{subfigure}{.49\textwidth}
    \centering
    \includegraphics[scale=0.5]{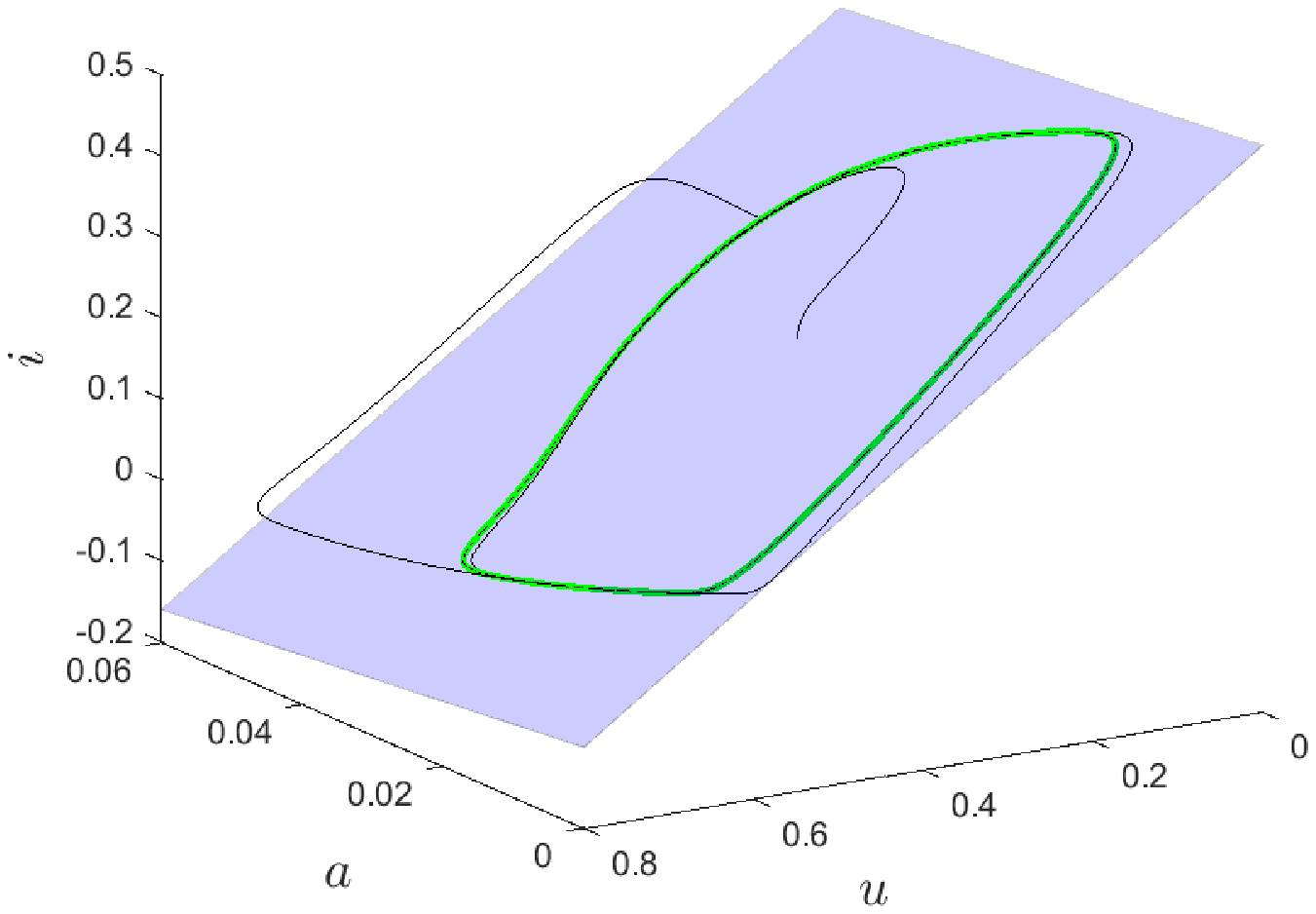}
\end{subfigure}
\begin{subfigure}{.49\textwidth}
    \centering
    \includegraphics[scale=0.5]{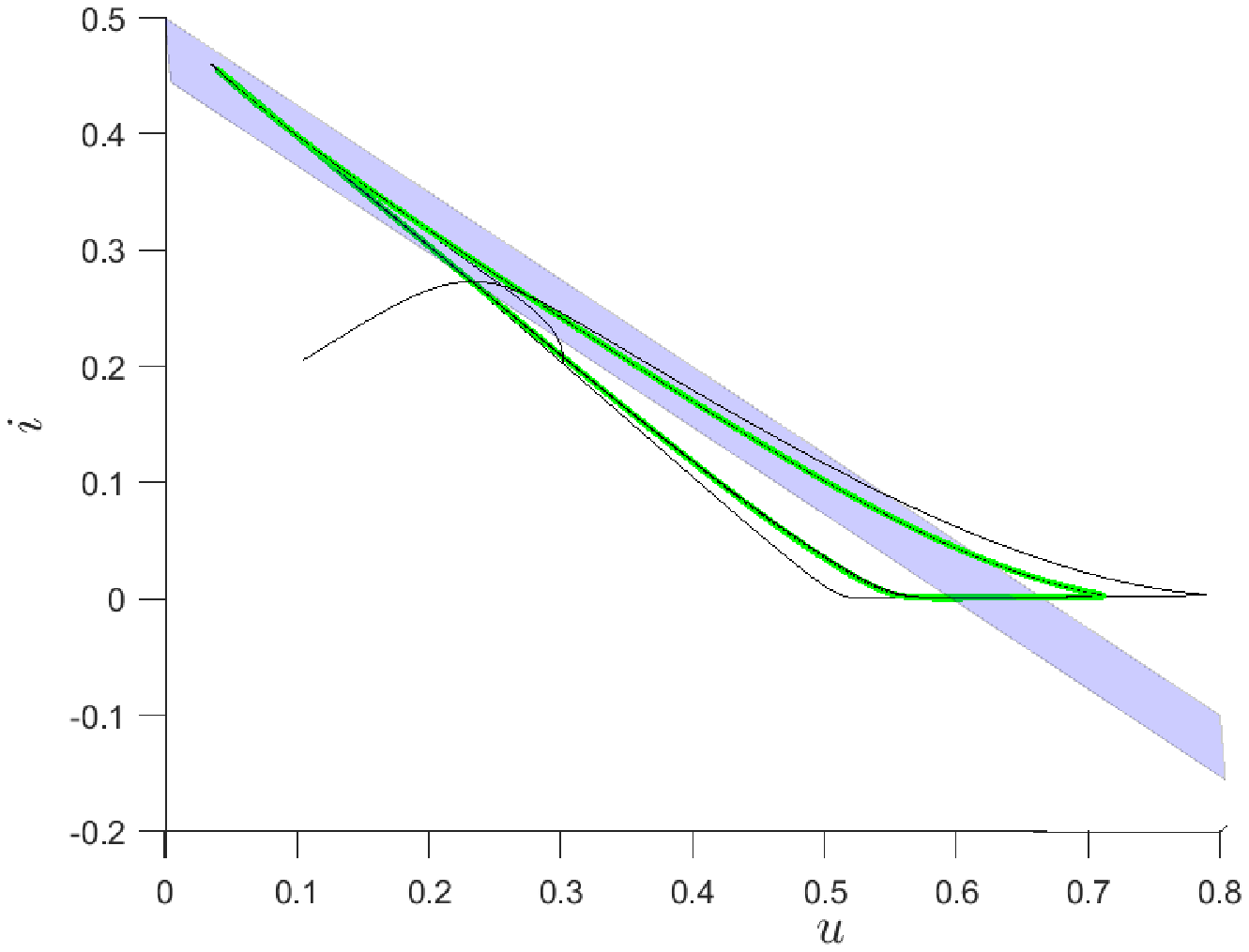}
\end{subfigure}
\caption{Two solutions of system~\eqref{eqn:SAUIS} for $m=3.5$ tending to the limit cycle (in green) whose dynamics is close to the plane~\eqref{plane} (in light blue).}
\label{fig:orbit_plane}
\end{figure}

\subsection{The saddle-node bifurcation of limit cycles in the model}

The version of the SAUIS-$\epsilon$ model~\eqref{eqn:SAUIS} exhibits a saddle-node bifurcation of limit cycles in a certain region of the parameter space using $m$ as a bifurcation parameter. All along the paper we use the following values for the parameters, which we consider fixed: $\beta=2$, $\beta_a=0$, $\beta_u=0.5$, $\alpha_a=0.015$, $\alpha_i=0.001$, $\nu_{a}=3$, $\delta_{a}=0.01$, $\delta_u=0.03$, $\delta=1$, $\epsilon=10^{-5}$ and $\eta = 0.4$. For those parameters Lemma~\ref{equilibrium} ensures the existence of one equilibrium, $\textbf{e}^*(m)=(a^*(m),u^*(m),i^*(m))$, which in this case is unique and lies inside the plane \eqref{plane}. For the computation of the Poincaré map we consider the plane $\Sigma:=\{g(a,u,i)=a-a^*(m)=0\}$ as a Poincaré section near the limit cycles.

\begin{figure}[t]
\begin{subfigure}{.33\textwidth}
  \centering
  \includegraphics[scale=.4]{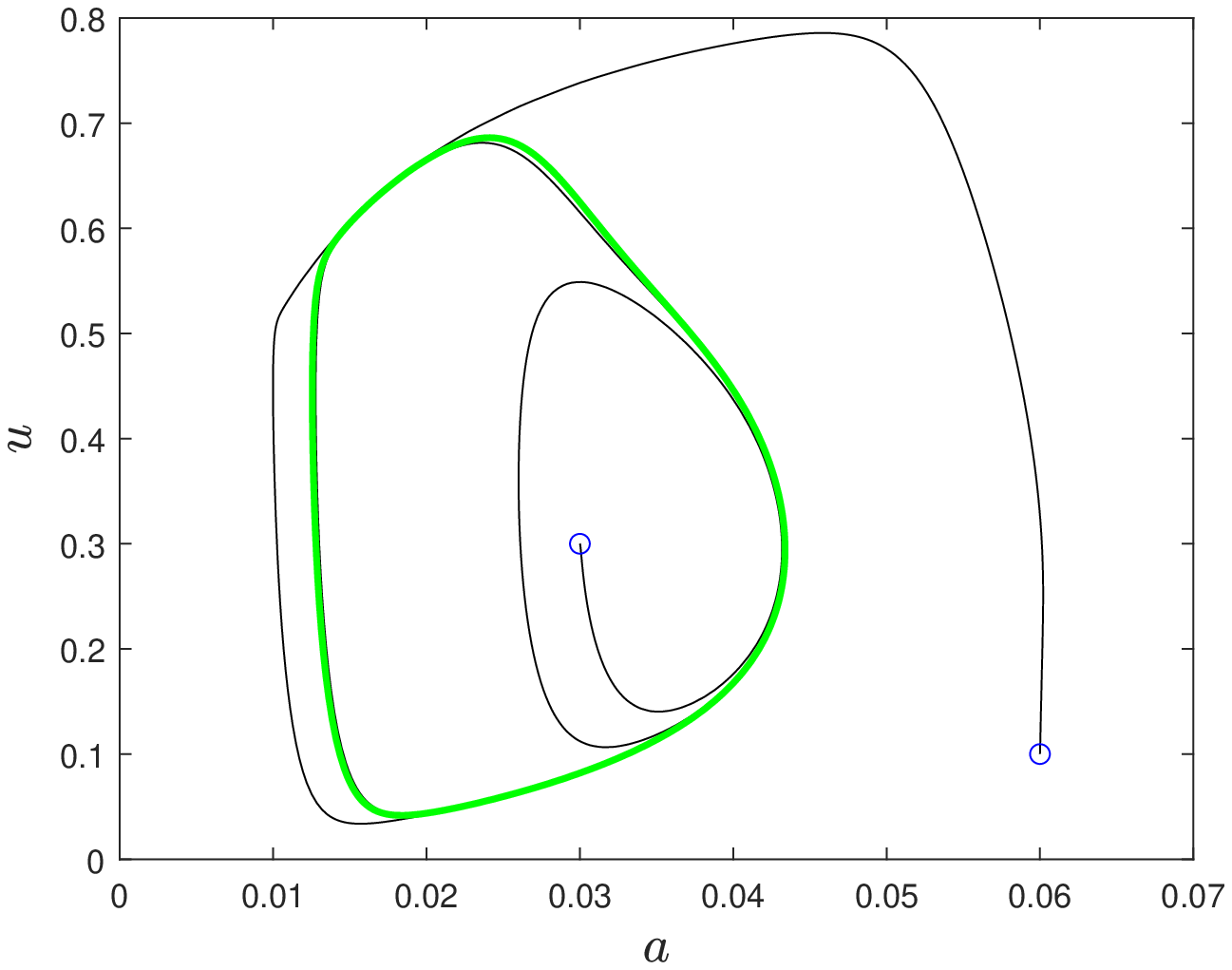}  
\end{subfigure}
\begin{subfigure}{.33\textwidth}
  \centering
  \includegraphics[scale=.4]{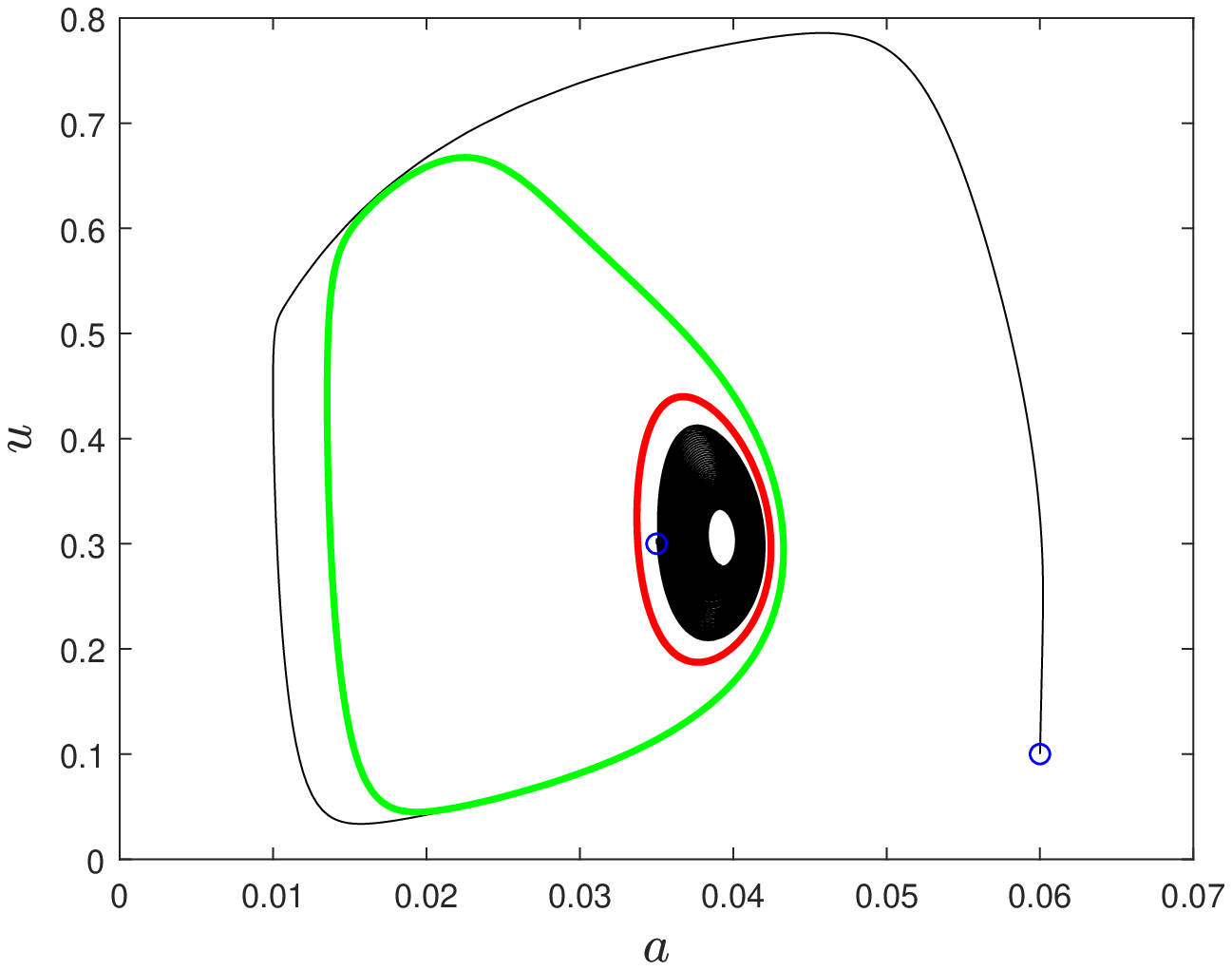}  
\end{subfigure}
\begin{subfigure}{.33\textwidth}
  \centering
  \includegraphics[scale=.4]{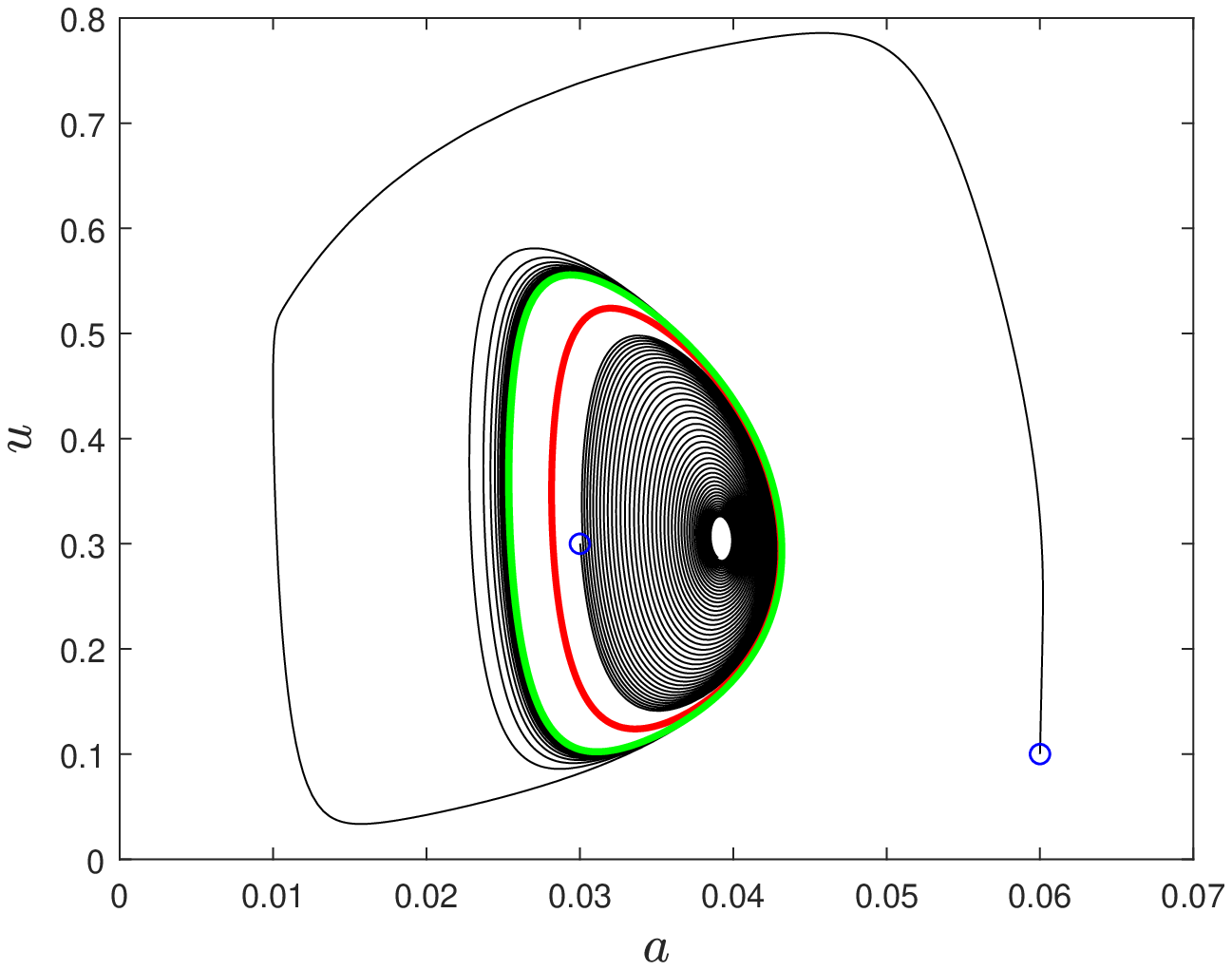}  
\end{subfigure}\\
\begin{subfigure}{.33\textwidth}
  \centering
  \includegraphics[scale=.4]{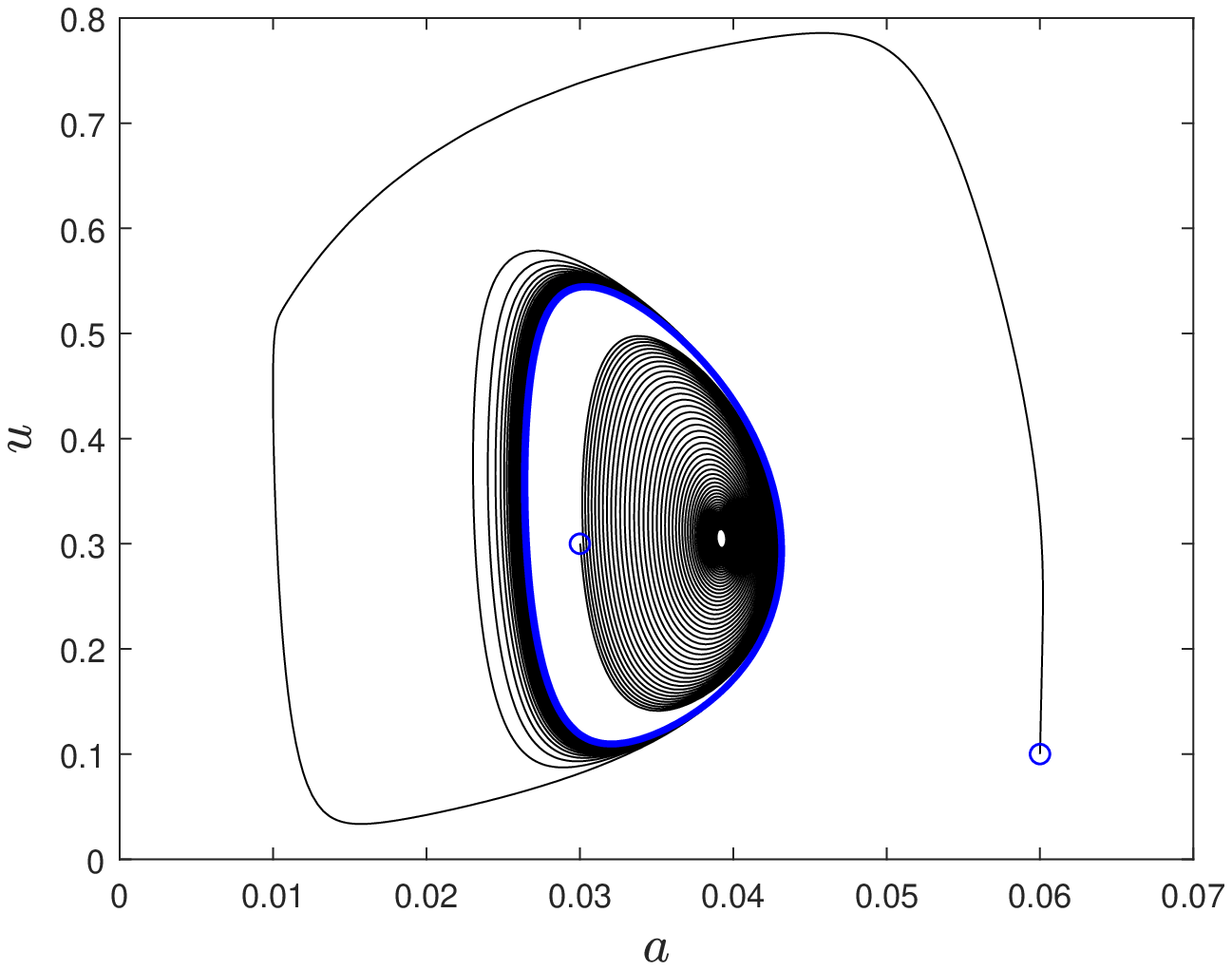}  
\end{subfigure}
\begin{subfigure}{.33\textwidth}
  \centering
  \includegraphics[scale=.4]{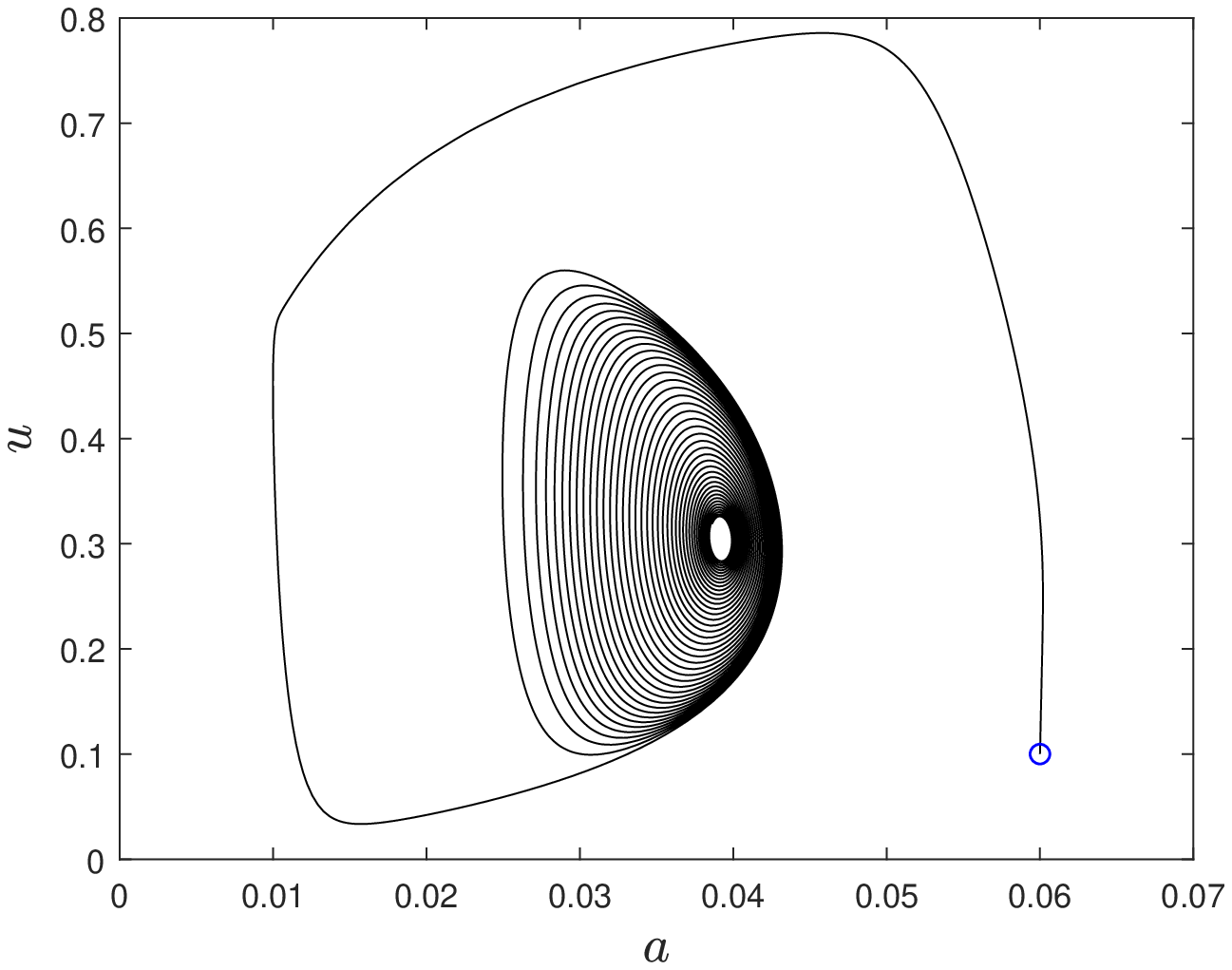}
\end{subfigure} 
\begin{subfigure}{.33\textwidth}
  \centering
  \includegraphics[scale=.4]{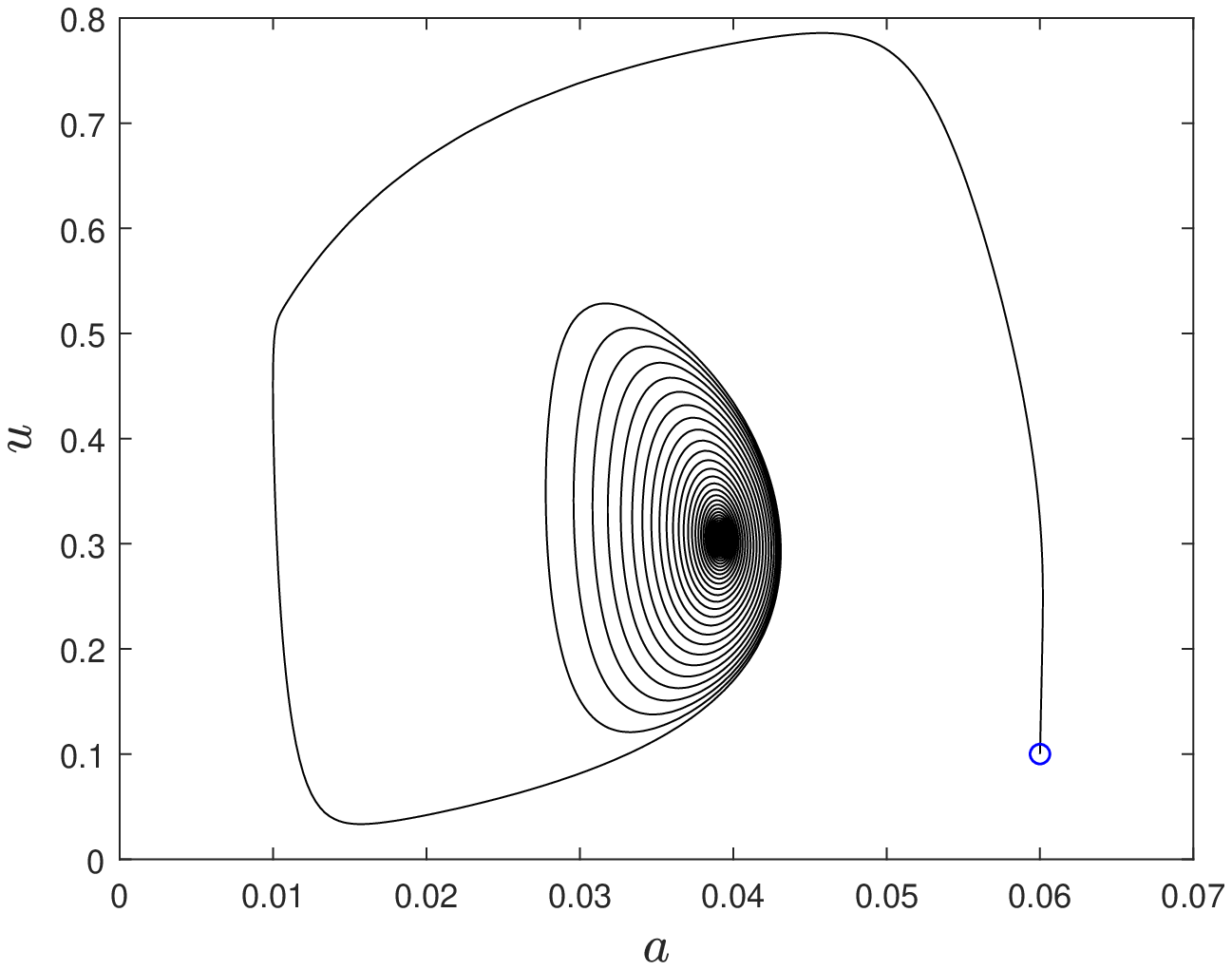}
\end{subfigure} 
\caption{Projections on the $(a,u)$-plane of some solutions of system~\eqref{eqn:SAUIS} for $m=3.69$ (top left), $m=3.73$ (top middle), $m=3.76$ (top right), $m=3.76082$ (bottom left), $m=3.77$ (bottom middle) and $m=3.8$ (bottom right). The orbit in green corresponds to the hyperbolic attractive limit cycle, the orbit in red to the hyperbolic repulsive limit cycle and the orbit in blue to the non-hyperbolic semistable limit cycle. Blue circles mark the initial conditions.}
\label{fig:retrats_fase}
\end{figure}

For values of the parameter $m<m_h\approx 3.698$, the equilibrium $\mathbf{e}^*(m)$ is unstable and the system presents an attractive limit cycle that is mostly restricted to that plane (see Figure~\ref{fig:orbit_plane}). As the parameter increases, the stability of the equilibrium changes, producing a subcritical Hopf bifurcation at $m=m_h$ and a bistability scenario for $m>m_h$: the equilibrium and the limit cycle (see top panels in Figure~\ref{fig:retrats_fase}). As a consequence of the subcritical character of the Hopf bifurcation, an unstable limit cycle is born from the equilibrium. As long as the parameter continues increasing, both limit cycles (stable and unstable) start approaching each other. At the saddle-node bifurcation point $m=m_{sn1}\approx 3.761$, both limit cycles collide in a non-hyperbolic semistable limit cycle and disappear, leaving the equilibrium as the only stable scenario (see bottom panels in Figure~\ref{fig:retrats_fase}). The dynamics remain similar for $m_{sn1}<m<m_{sn2}$ and, at the second bifurcation point $m=m_{sn2}\approx 16.057$, a second saddle-node bifurcation of limit cycles occurs. In this case, a non-hyperbolic semistable limit cycle appears and splits into two hyperbolic (stable and unstable) limit cycles as $m>m_{sn2}$, and this happens without any change of the stability of the equilibrium $\textbf{e}^*(m)$ which remains always asymptotically stable (see Figure~\ref{fig:retrats_fase_m_gran}). The limit cycles separate each other until a position which is qualitatively unchanged as $m$ increases. In Figure~\ref{fig:bifurcation_diagram} we show the previous described bifurcation phenomena, where the amplitudes of the stable limit cycle (solid line) and unstable limit cycle (dashed line) with respect to the proportion of infected nodes are displayed. The amplitude is computed as the difference between the largest and smallest value of the proportion of infected nodes along the orbit. Zero amplitude corresponds to the equilibrium $\textbf{e}^*(m)$. In Figure~\ref{fig:lyapunov_exp} we represent the Floquet characteristic multipliers of the monodromy matrix $D\varphi_{T(x)}(\bm{x})$ of the stable (solid blue line) and unstable (dashed red line) orbits, showing the stability of each limit cycle. We point out that, since $D\varphi_{T(x)}(\bm{x})$ is the monodromy matrix of a limit cycle, one of its eigenvalues is always 1 (the one with eigenvector orthogonal to the section $\Sigma$). Moreover, since the motion is rapidly almost captured by the plane \eqref{plane}, a second eigenvalue is close to zero. The stability of the limit cycles is then given by the remaining third eigenvalue. On the left-hand panel we can see how the unstable limit cycle appears for $m=m_{h}$, with its third eigenvalue being larger than one. When approaching $m=m_{sn1}$ the unstable eigenvalue tends to 1, as it does the third eigenvalue of the stable limit cycle, producing the saddle-node bifurcation of limit cycles and the semi-stability of the orbit. On the right-hand panel we can see that at $m=m_{sn2}$ the semi-stable limit cycle appears giving birth to the stable and unstable limit cycles for $m>m_{sn2}$. We point out the strenght of the unstability in this case, as we also see on the bottom right panel of Figure~\ref{fig:retrats_fase_m_gran}, where orbits are rapidly pushed away from the unstable limit cycle.

\begin{figure}[th]
\begin{subfigure}{.33\textwidth}
  \centering
  \includegraphics[scale=.4]{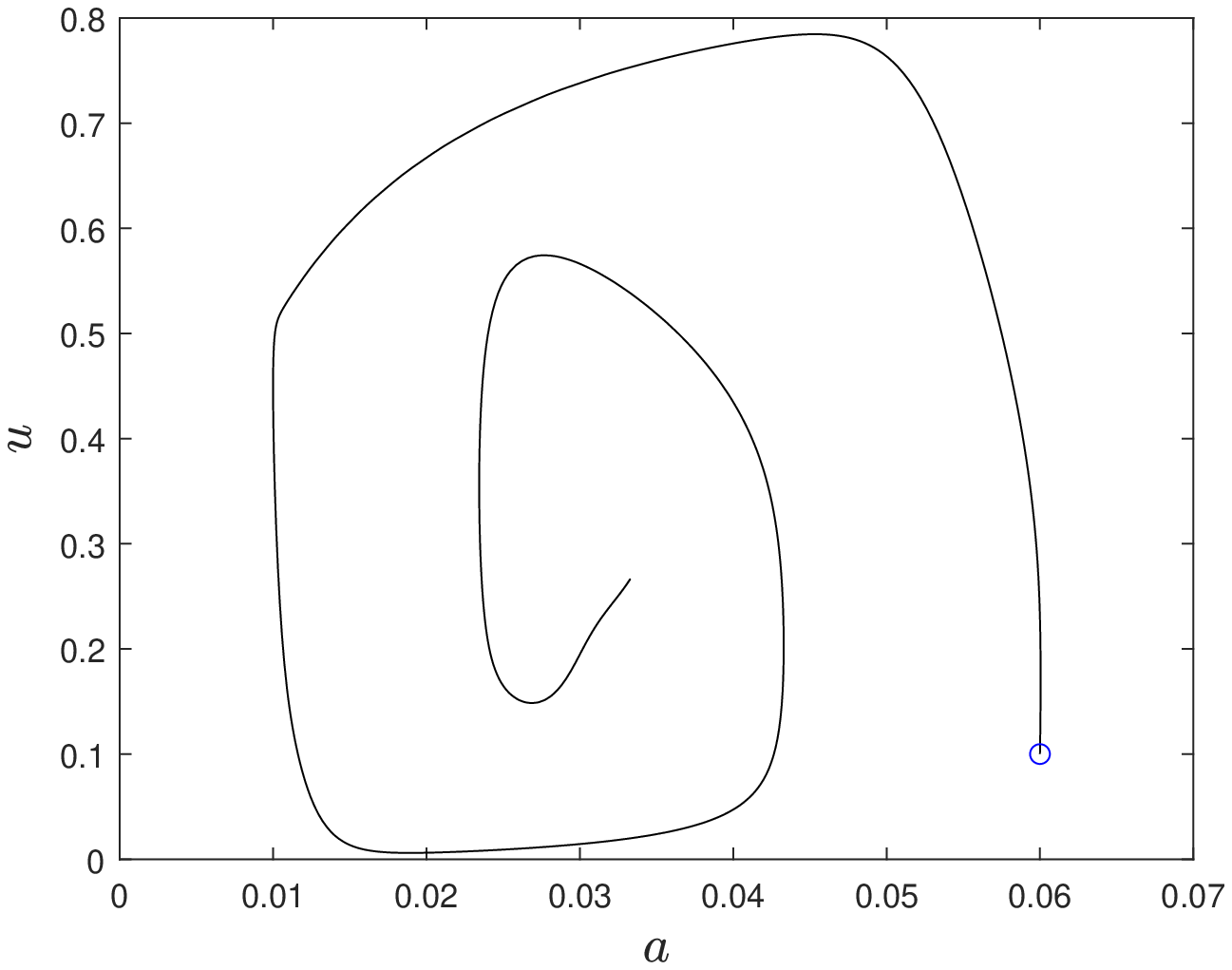}  
\end{subfigure}
\begin{subfigure}{.33\textwidth}
  \centering
  \includegraphics[scale=.4]{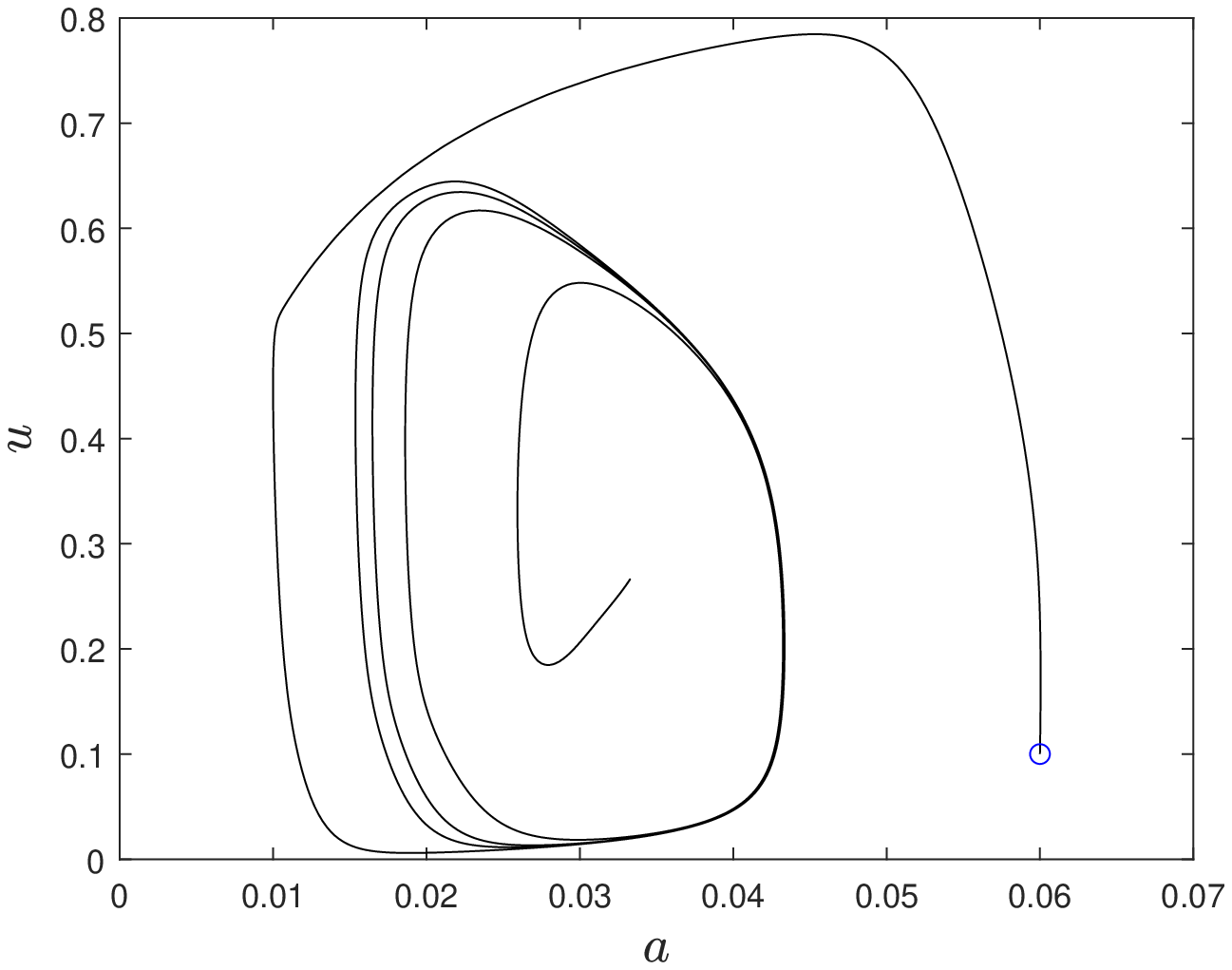}  
\end{subfigure}
\begin{subfigure}{.33\textwidth}
  \centering
  \includegraphics[scale=.4]{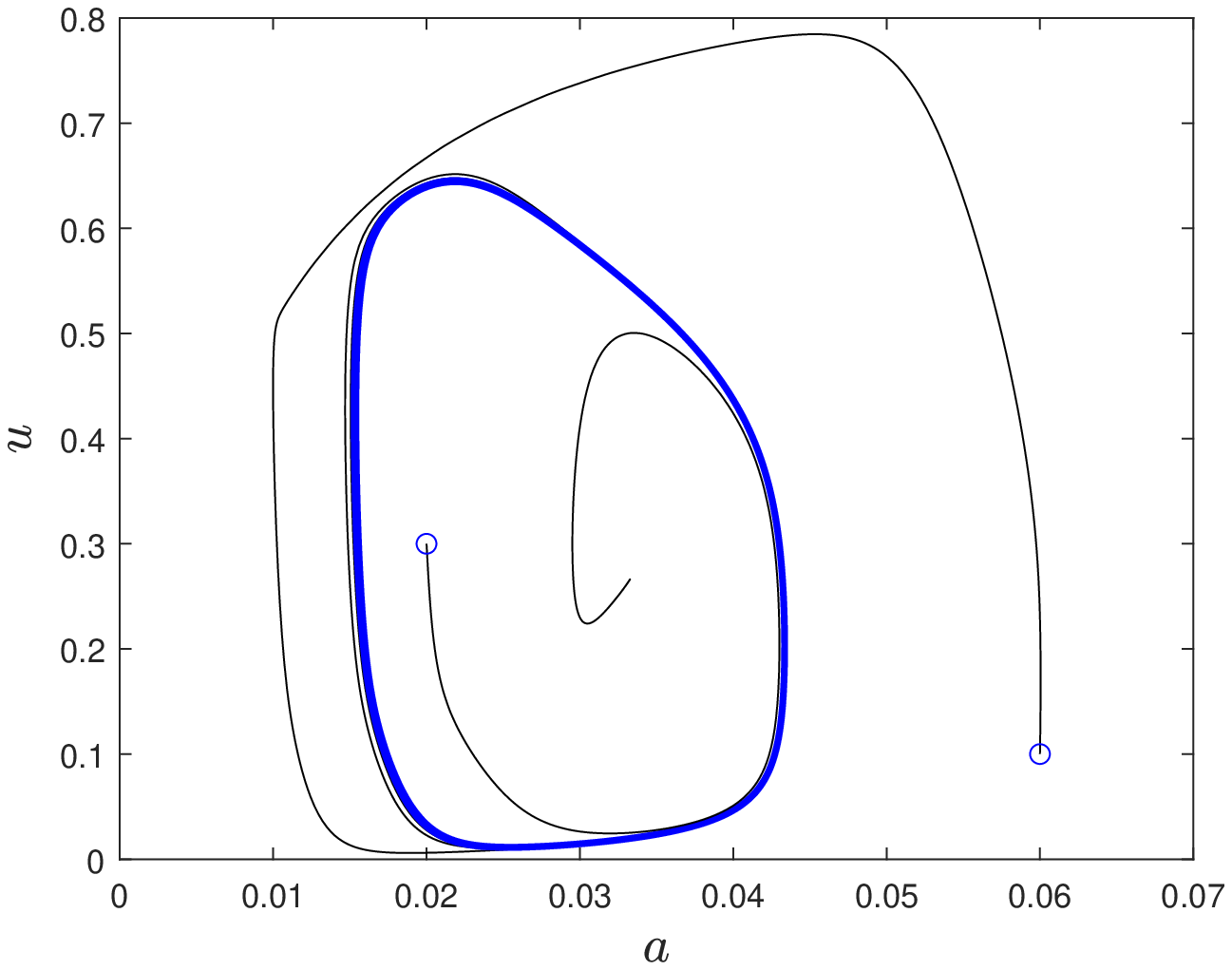}  
\end{subfigure}\\
\begin{subfigure}{.33\textwidth}
  \centering
  \includegraphics[scale=.4]{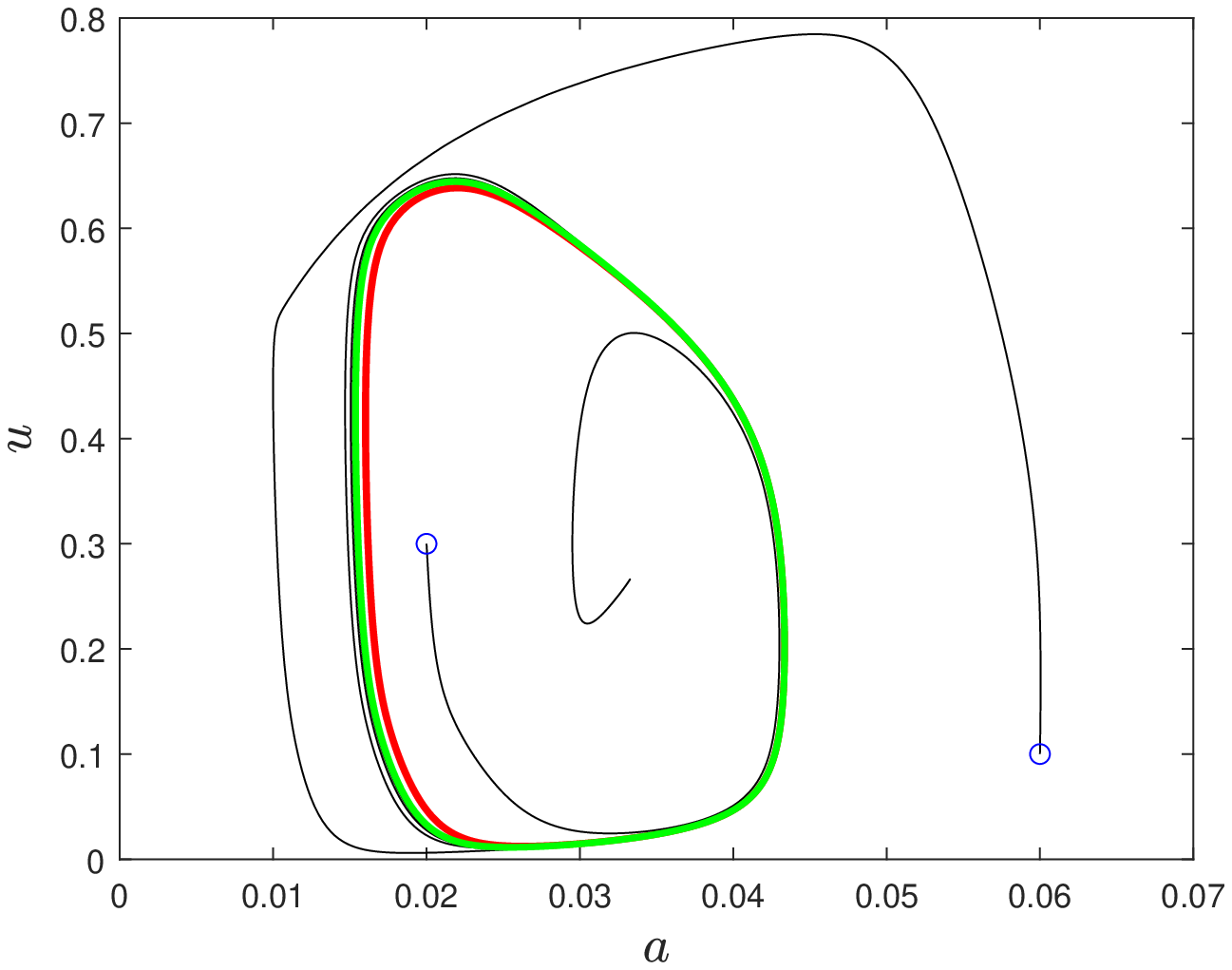}  
\end{subfigure}
\begin{subfigure}{.33\textwidth}
  \centering
  \includegraphics[scale=.4]{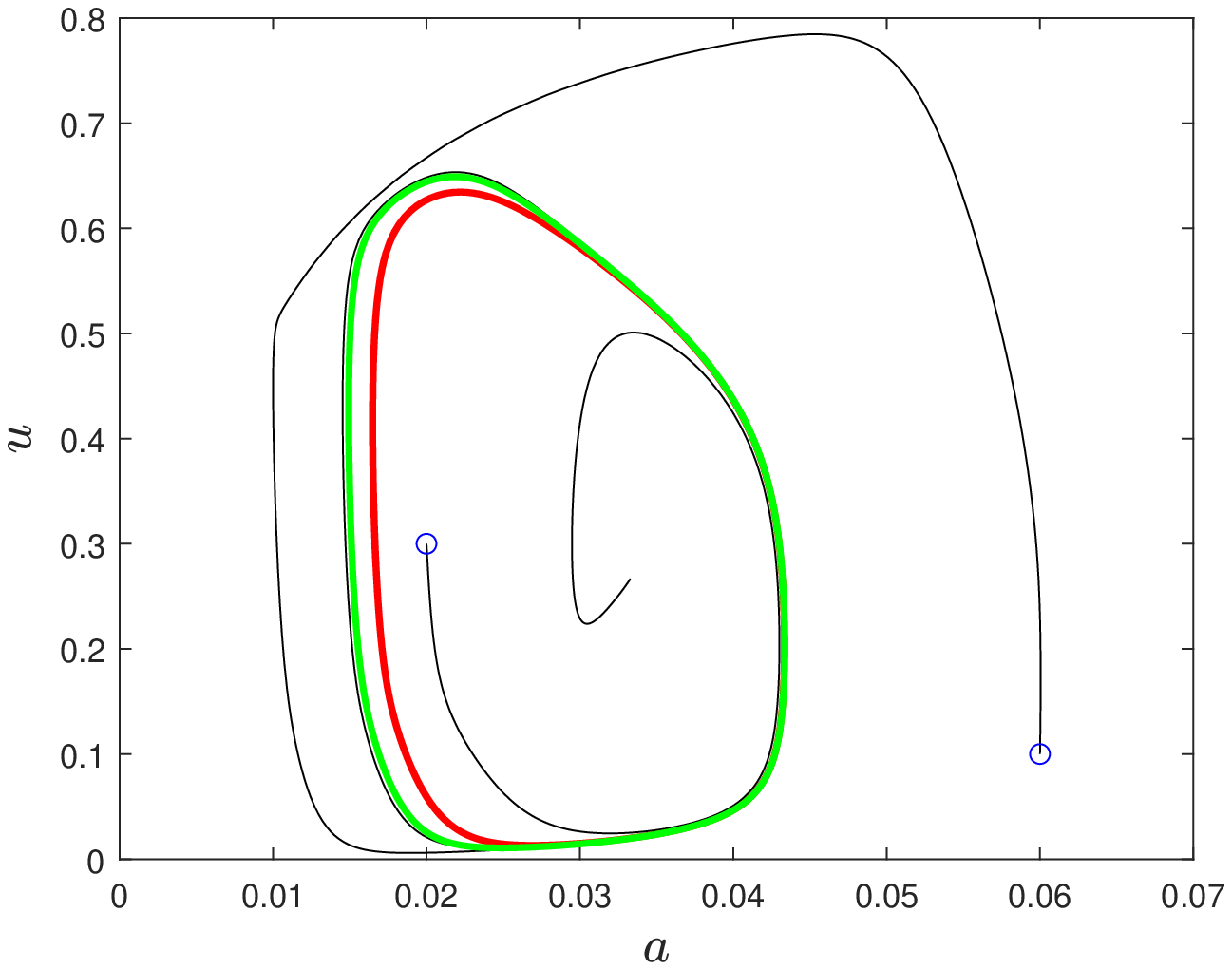}
\end{subfigure} 
\begin{subfigure}{.33\textwidth}
  \centering
  \includegraphics[scale=.4]{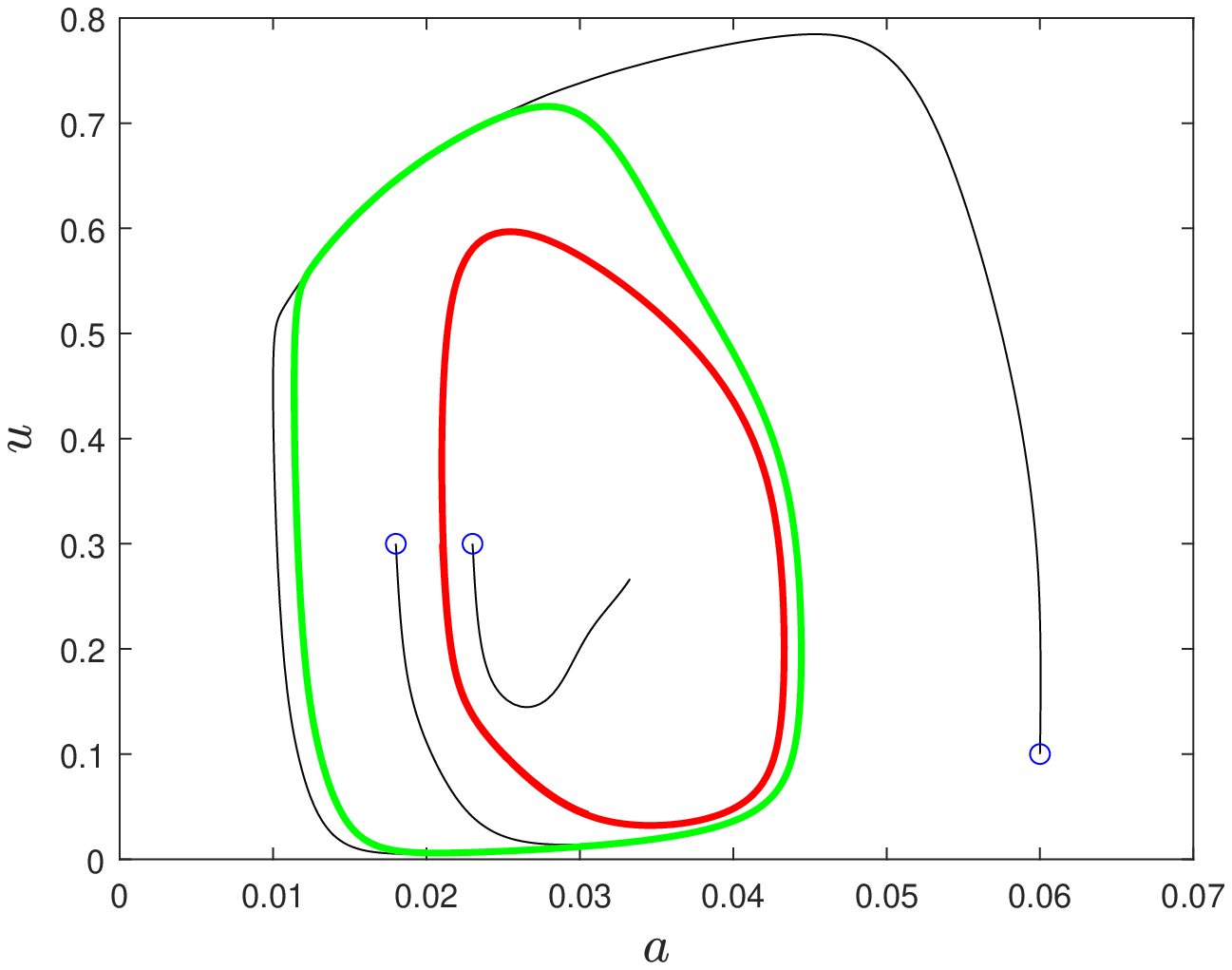}
\end{subfigure} 
\caption{Projections on the $(a,u)$-plane of some solutions of system~\eqref{eqn:SAUIS} for $m=16$ (top left), $m=16.05$ (top middle), $m=16.0575$ (top right), $m=16.05755$ (bottom left), $m=16.06$ (bottom middle) and $m=17$ (bottom right). The orbit in green corresponds to the hyperbolic attractive limit cycle, the orbit in red to the hyperbolic repulsive limit cycle and the orbit in blue to the non-hyperbolic semistable limit cycle. Blue circles mark the initial conditions.}
\label{fig:retrats_fase_m_gran}
\end{figure}

\begin{figure}[ht]
\begin{subfigure}{.49\textwidth}
    \centering
    \includegraphics[scale=0.55]{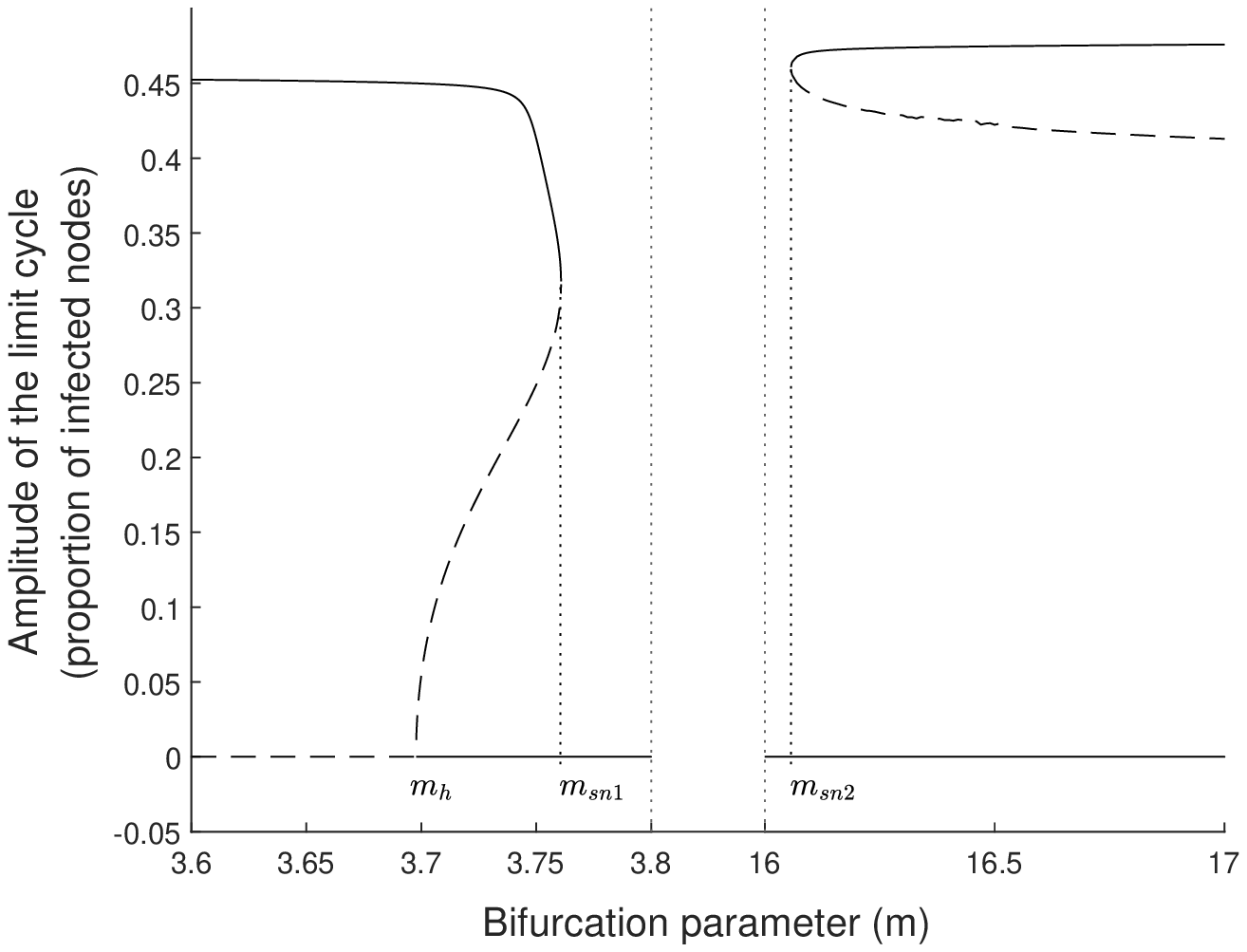}
\end{subfigure}
\begin{subfigure}{.49\textwidth}
    \centering
    \includegraphics[scale=0.55]{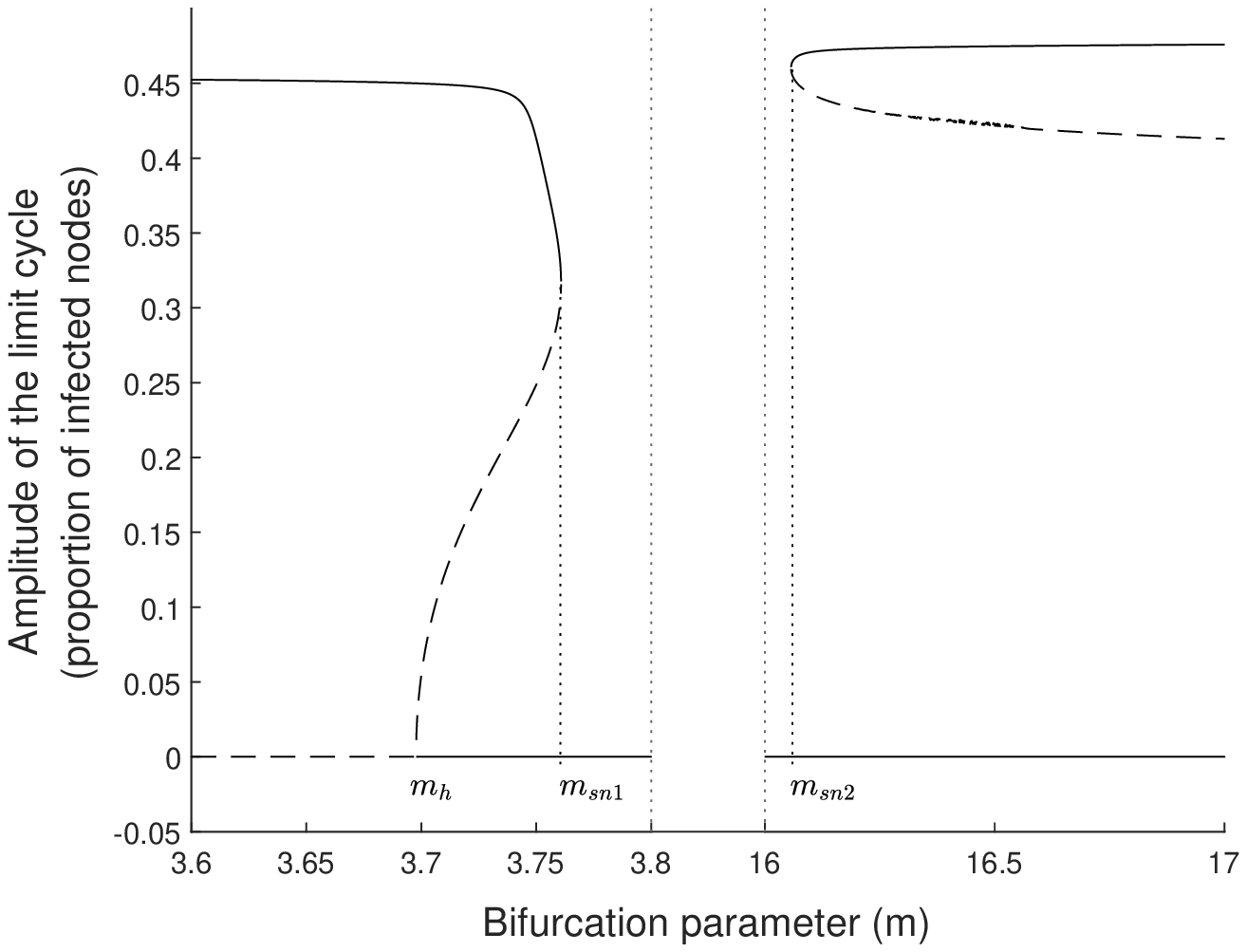}
\end{subfigure}
    \caption{Bifurcation diagram for $\epsilon=0$ (left) and $\epsilon=10^{-5}$ (right).}
    \label{fig:bifurcation_diagram}
\end{figure}

\begin{figure}[ht]
\begin{subfigure}{.49\textwidth}
    \centering
    \includegraphics[scale=0.55]{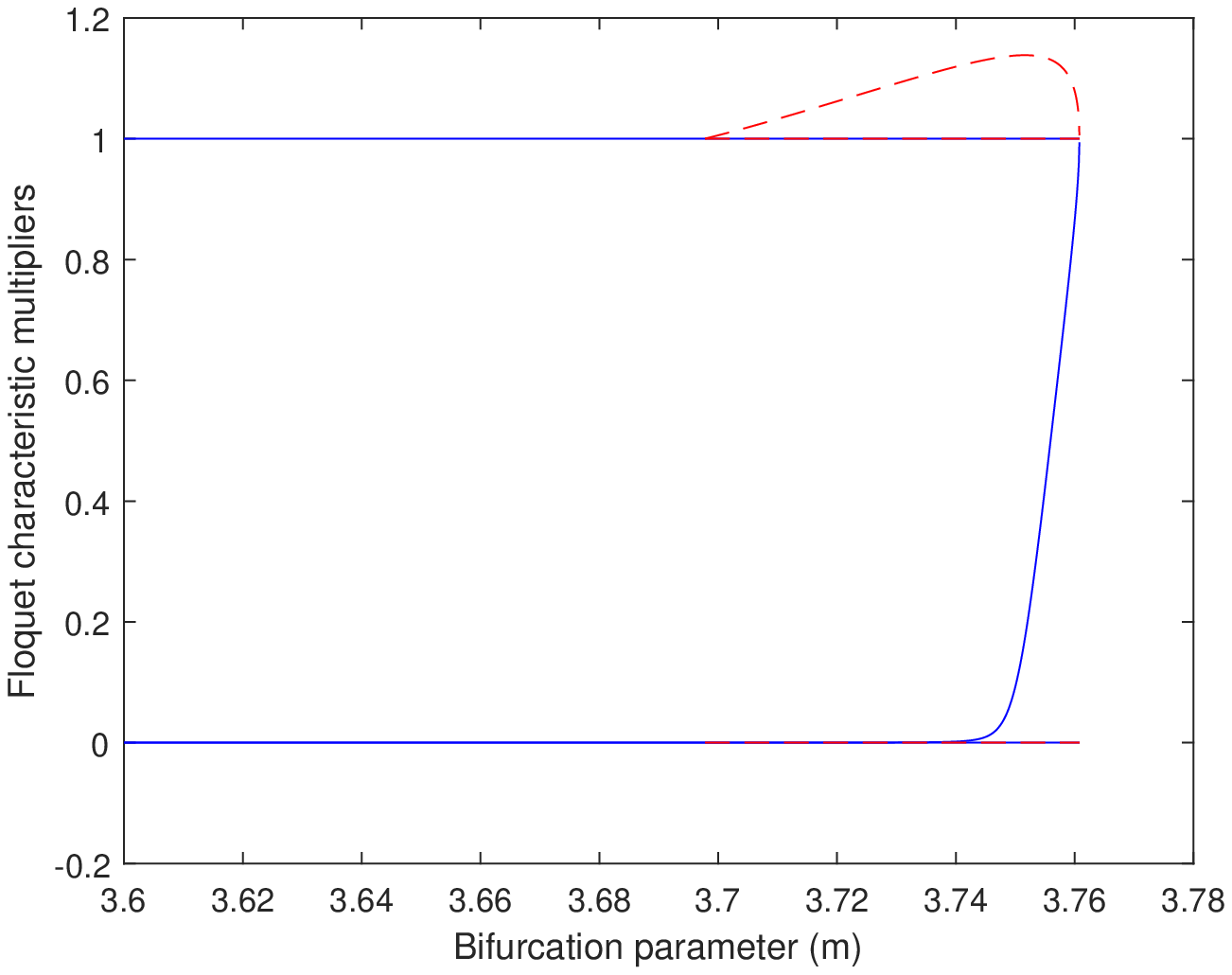}
\end{subfigure}
\begin{subfigure}{.49\textwidth}
    \centering
    \includegraphics[scale=0.55]{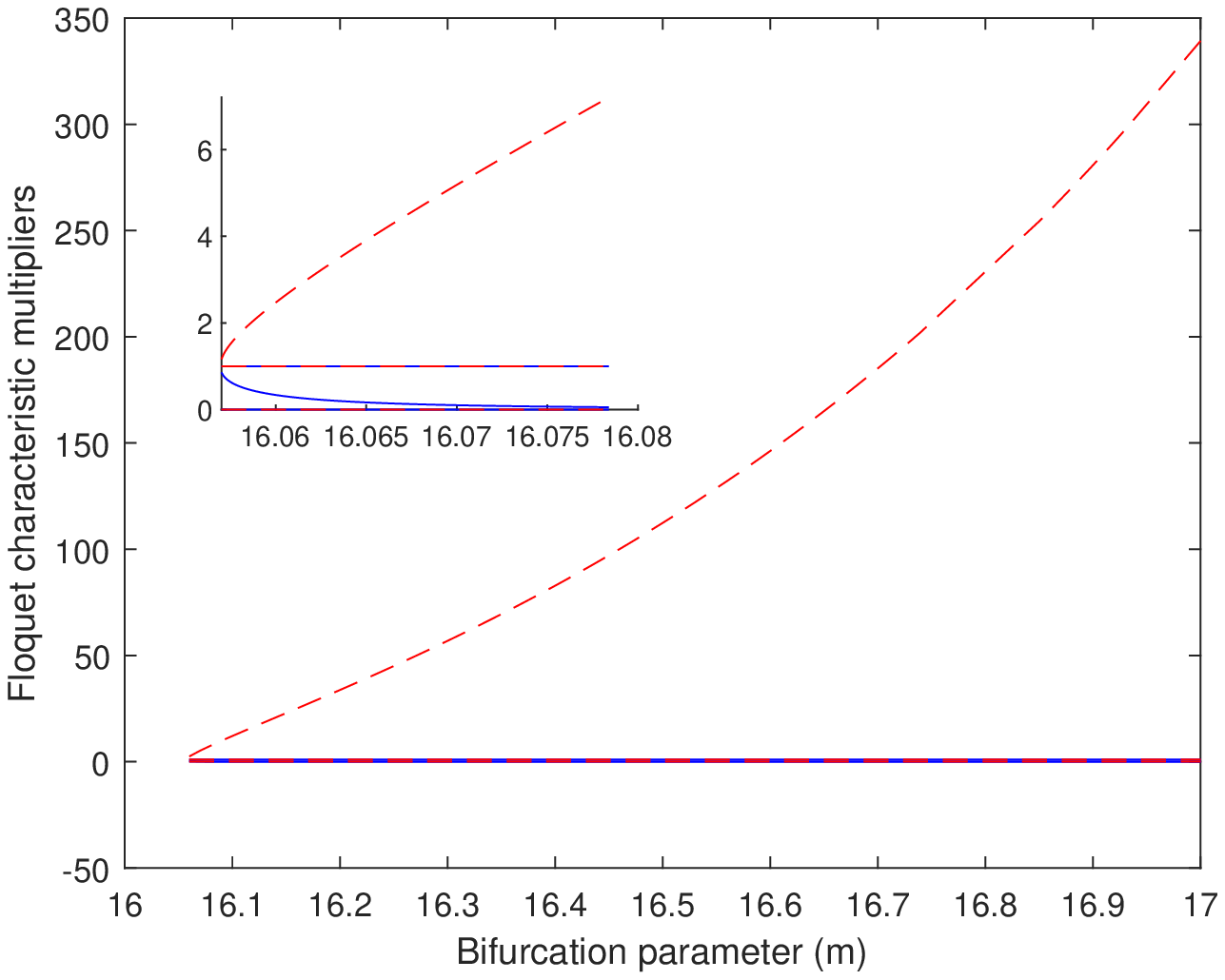}
\end{subfigure}
    \caption{Floquet characteristic multipliers of the stable limit cycle (solid blue line) and the unstable limit cycle (dashed red line).}
    \label{fig:lyapunov_exp}
\end{figure}

\section{Stochastic simulations}
System (\ref{eqn:SAUIS}) is a continuous model for the averaged propagation of an SAUIS-$\epsilon$ epidemics among the individuals in a fully mixed population. So, we need to perform stochastic simulations in order to assess in which sense and to which extent the invariant objects exhibited by system (\ref{eqn:SAUIS}) are found also in a discrete context. 

\subsection{General simulation setup}\label{sec:Simsetup}
As usual in the setting of continuous-time stochastic simulations, we use the well-known Gillespie algorithm (GA in what follows) \cite{Gillespie}, which was originally designed to simulate a fully mixed chemically reacting system. We have a population of $N$ individuals and the algorithm keeps trace of the total numbers $S$, $A$, $I$, $U$ of susceptible, aware, infected and unwilling individuals. In short: at each step, an event is chosen at random according to its \emph{weight} (for instance, an infection $I+S\rightarrow I+I$ has weight $\beta$) over the sum of the weights of all possible events. Once chosen, the event takes place and $A$, $I$, $U$, $S=N-A-I-U$ are accordingly updated. The continuous time is increased by a random positive number drawn from a certain exponential probability distribution \cite{Gillespie,JRS2}.

Given a population size $N$, a combination of model parameters, and an initial condition $a(0),u(0),i(0)$, we run 50 independent simulations, each corresponding to a random distribution of $a(0)N$, $u(0)N$, $i(0)N$ and $(1-a(0)-u(0)-i(0))N$ nodes having respectively the initial states of aware, unwilling, infected and susceptible. For any experiment, we store the evolution of $a(t)$, $u(t)$ and $i(t)$ as three time series of equally-spaced points in the interval $[0,T]$, where $T$ is the maximum running continuous-time of the simulation. All along the paper, the caption of each reported figure obtained by simulation includes the specification of the values of $N$, $a(0)$, $u(0)$, $i(0)$ and $T$.

\subsection{Detection of the bistabiliy regime}\label{sec:Bistability}
With the aim of producing an analogous stochastic version of the numerical bifurcation diagram depicted in Figure~\ref{fig:bifurcation_diagram} we follow the procedure mentioned in the previous section. We perform $50$ independent simulations for each value of the parameter $m$ and initial conditions starting from the location of the equilibrium $\textbf{e}^*(m)=(a^*(m),i^*(m),u^*(m))$, keeping $a(0)=a^*(m)$ fixed, increasing the value of $i(0)$ from $i^*(m)$ by $0.01$ and computing $u(0)$ according to the equality~\eqref{plane}. In this way, for each $m$ approximately $20$ different initial conditions are considered inside a straight line lying on the plane~\eqref{plane}. The larger $i(0)$, the farther from the equilibrium the initial condition.

\begin{figure}[t]
\begin{subfigure}{.33\textwidth}
  \centering
  \includegraphics[scale=.4]{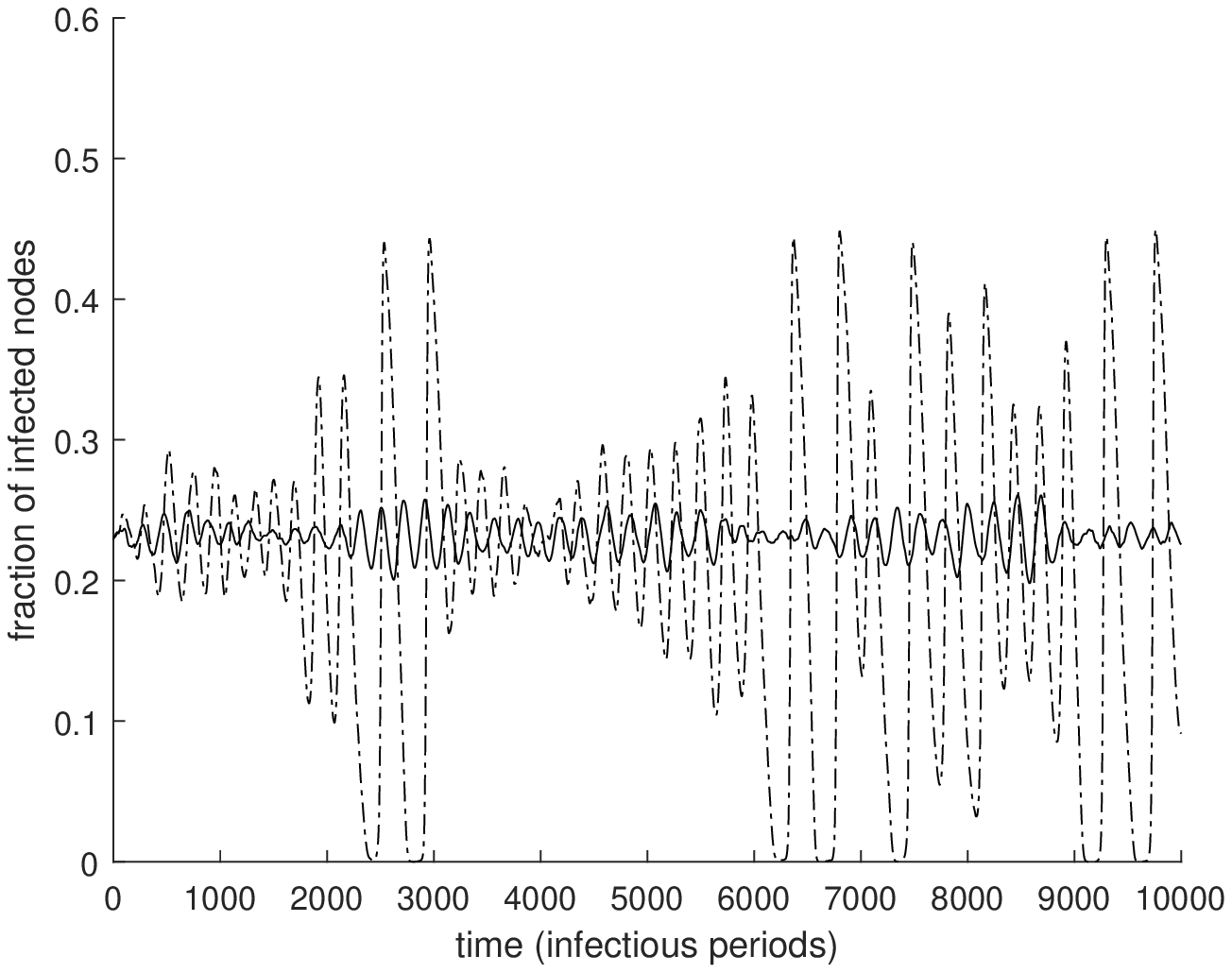}  
\end{subfigure}
\begin{subfigure}{.33\textwidth}
  \centering
  \includegraphics[scale=.4]{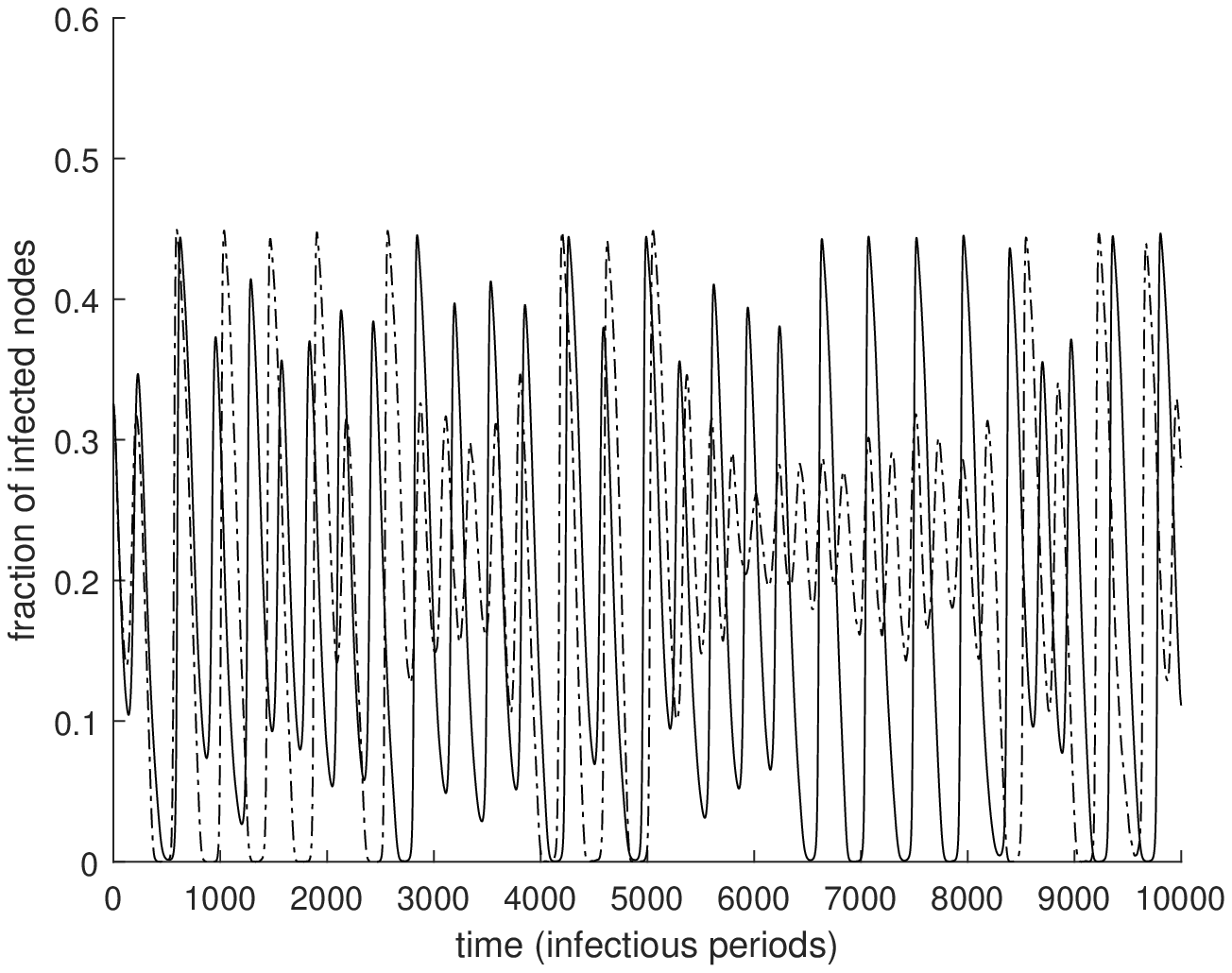}  
\end{subfigure}
\begin{subfigure}{.33\textwidth}
  \centering
  \includegraphics[scale=.4]{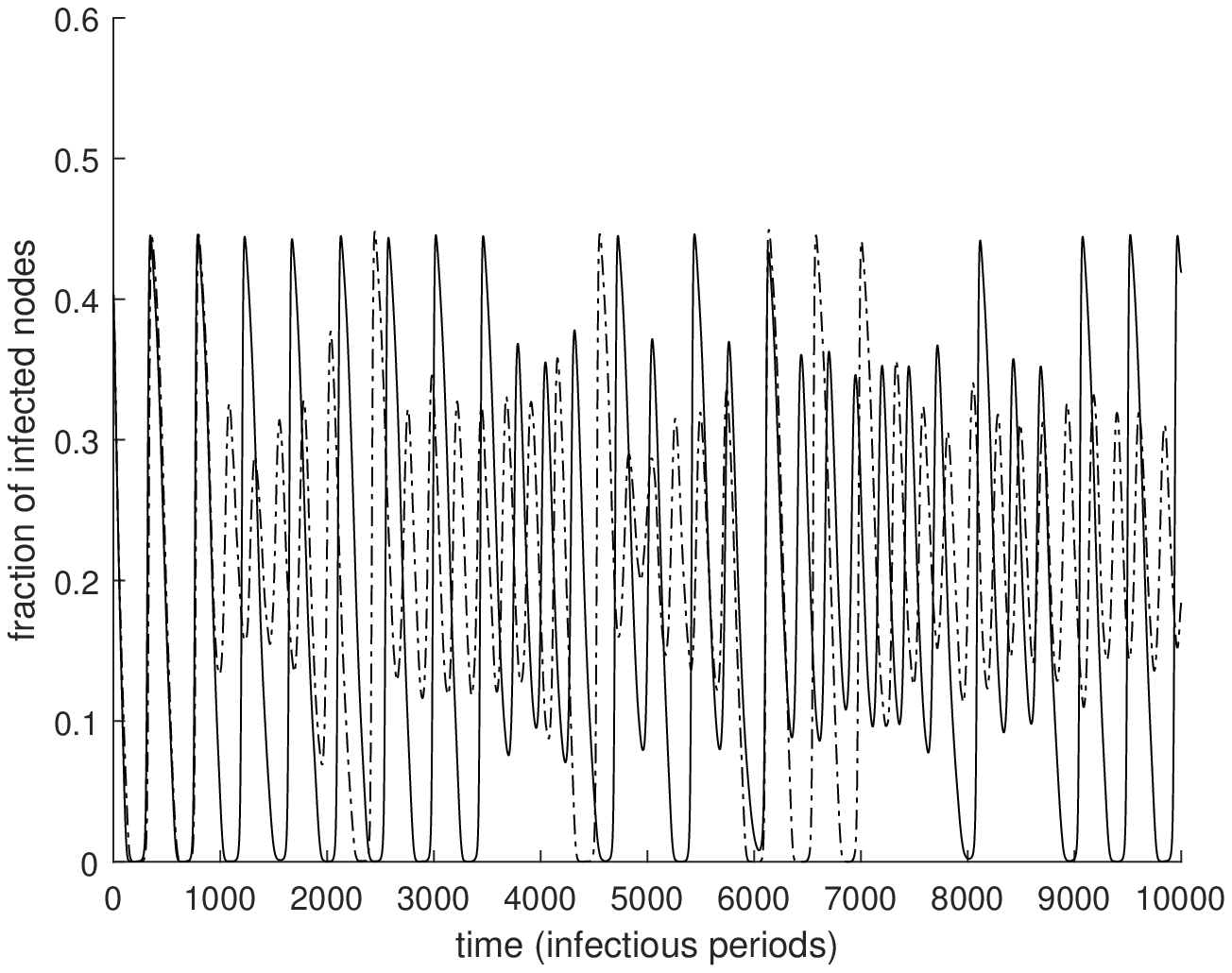}  
\end{subfigure}\\
\begin{subfigure}{.33\textwidth}
  \centering
  \includegraphics[scale=.4]{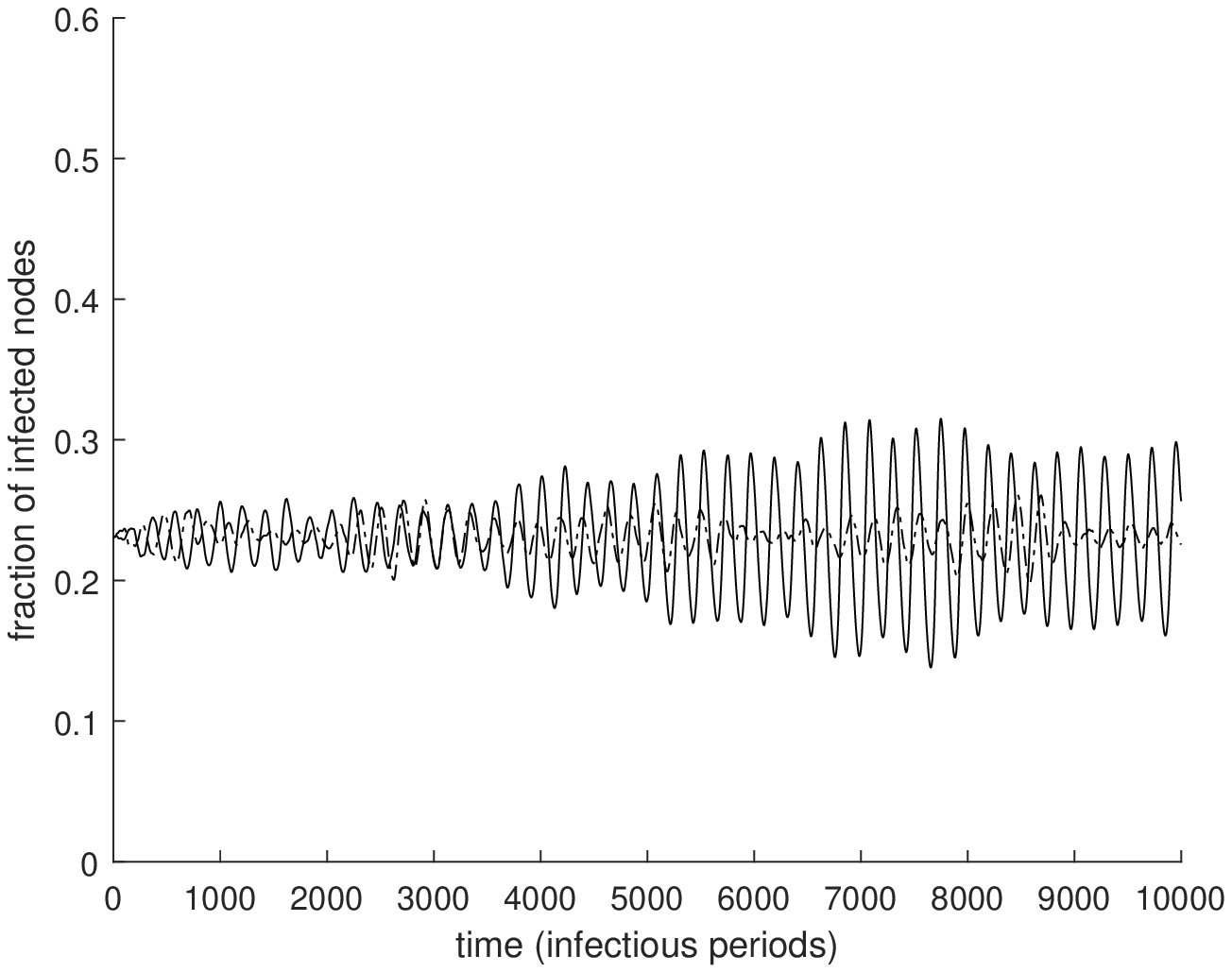}  
\end{subfigure}
\begin{subfigure}{.33\textwidth}
  \centering
  \includegraphics[scale=.4]{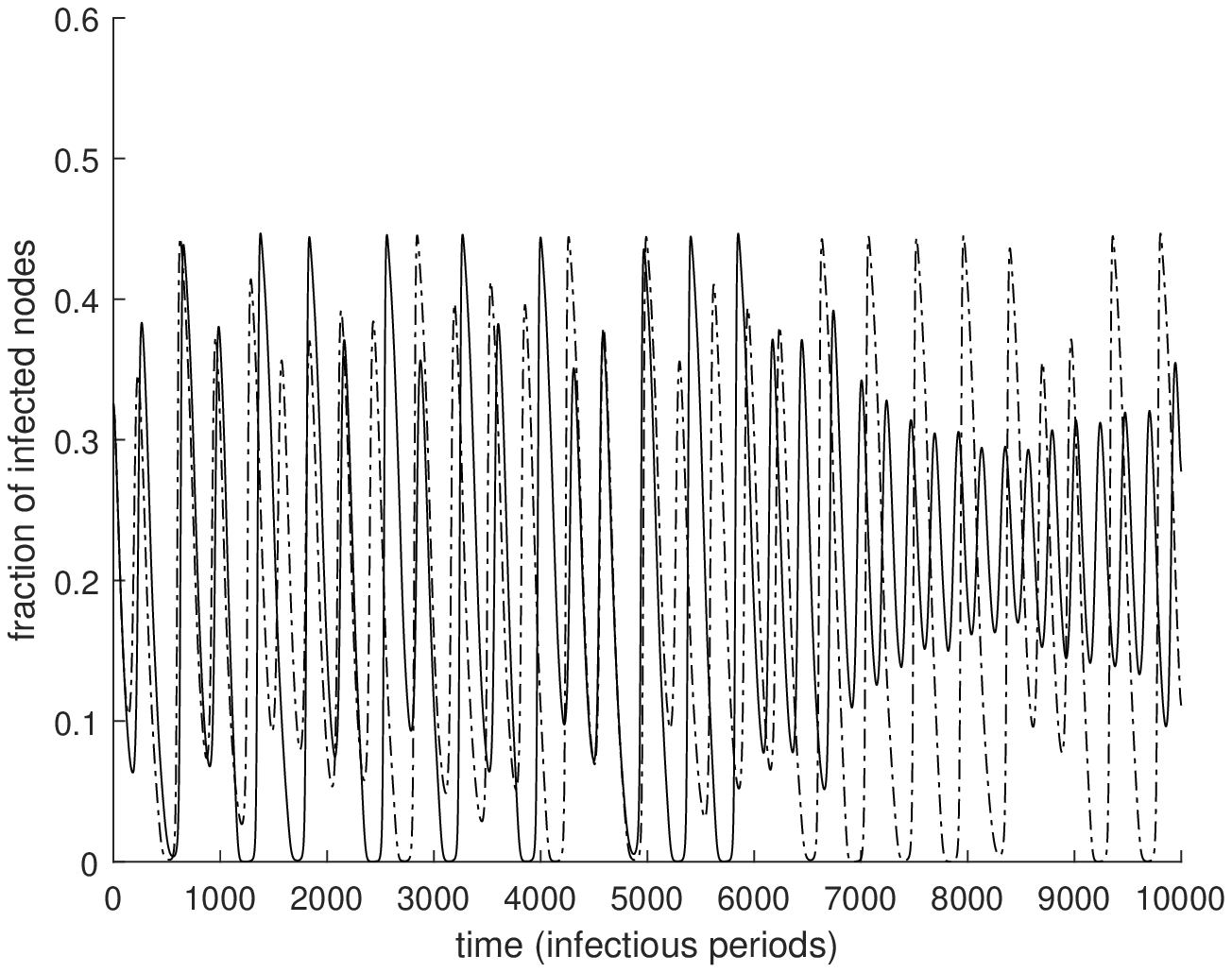}
\end{subfigure} 
\begin{subfigure}{.33\textwidth}
  \centering
  \includegraphics[scale=.4]{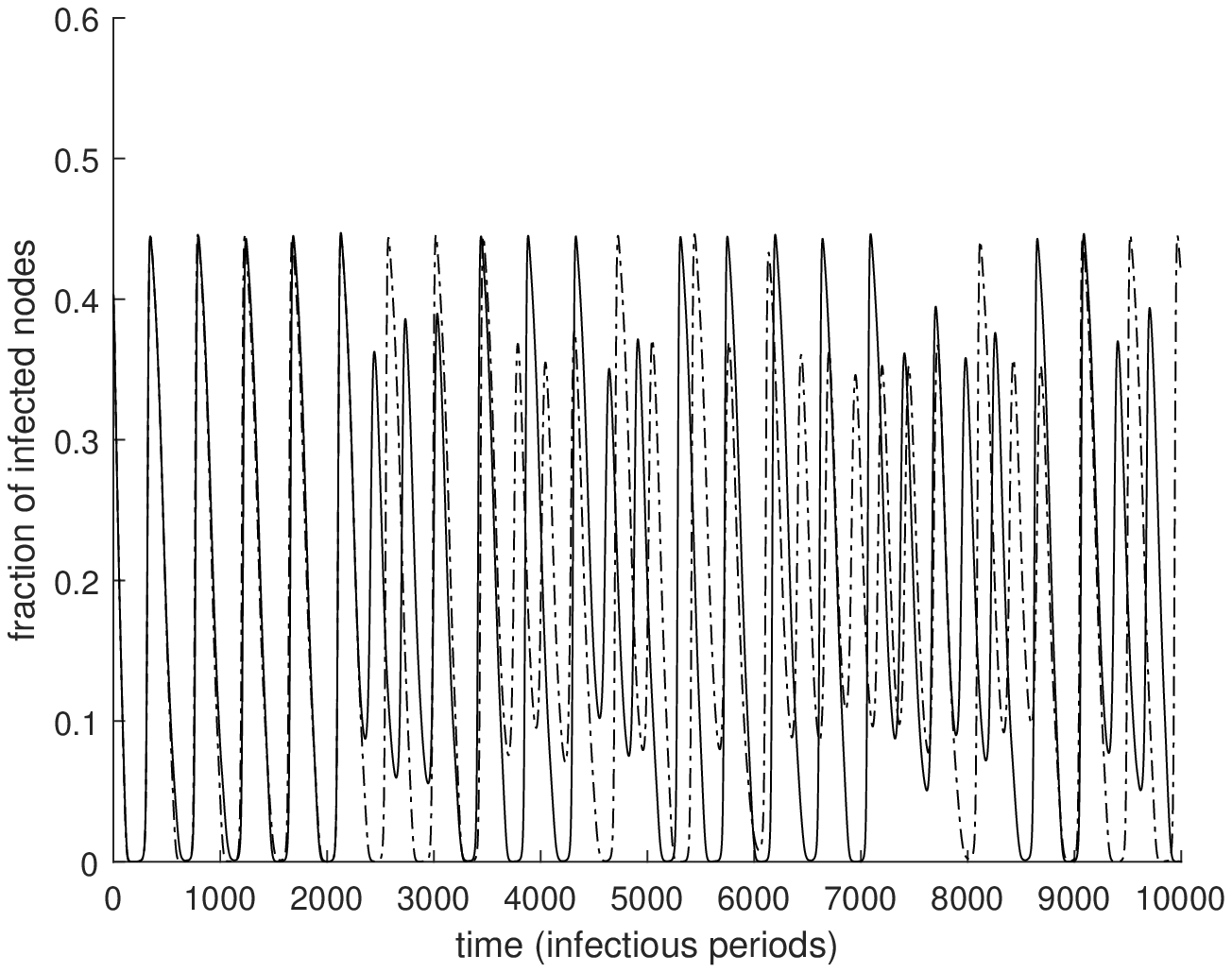}
\end{subfigure} 
\caption{On top, two outputs from GA with parameter $m=3.7$ with $N=10^{5}$ (dotted line) and $N=10^{6}$ (solid line), and initial conditions $a_0=0.039359$, $i_0=0.231104$, $u_0=0.306049$ (left), $a_0=0.039359$, $i_0=0.331104$, $u_0=0.172716$ (center), $a_0=0.039359$, $i_0=0.431104$, $u_0=0.039382$ (right) and same initial random seed. 
On bottom, two outputs from GA with parameter $m=3.7$ with $N=10^{6}$ and same initial conditions as before with different initial random seed.}
\label{fig:experiments_simus_m_petit}
\end{figure}

In stochastic epidemic models the average of the data is the usual way to construct a single signal to compare with the analytic dynamics. This is so because ODE systems are a good approximation of the mean of realizations of stochastic processes in systems with a large number of components. However, although the standard mean of trajectories is a good option when dealing with high prevalence endemic equilibria, it does not always work when a system exhibits fluctuating dynamics and even less in the presence of bistability. In the three lower panels of Figure~\ref{fig:experiments_simus_m_petit} we see two different realizations of the same experiment with identical initial conditions but different initial random seed. Although they start identically, it is clear that stochasticity reveals differences in the global behaviour beyond the expected small perturbations. This phenomena is due to the bistability of the system, where small stochastic perturbations may change a realization from one attraction basin to the other. Of course, the greater the number of nodes $N$, the more similitude between realizations, as we can see in the three top panels of Figure~\ref{fig:experiments_simus_m_petit}. The bistability region in the previous figure is sensitive and very narrow, as shown in Figure~\ref{fig:bifurcation_diagram}. A more clear example is given in Figure~\ref{fig:experiments_simus_m_gran} for the parameter values $m=17$ and $m=18$.

\begin{figure}[t]
\begin{subfigure}{.33\textwidth}
  \centering
  \includegraphics[scale=.4]{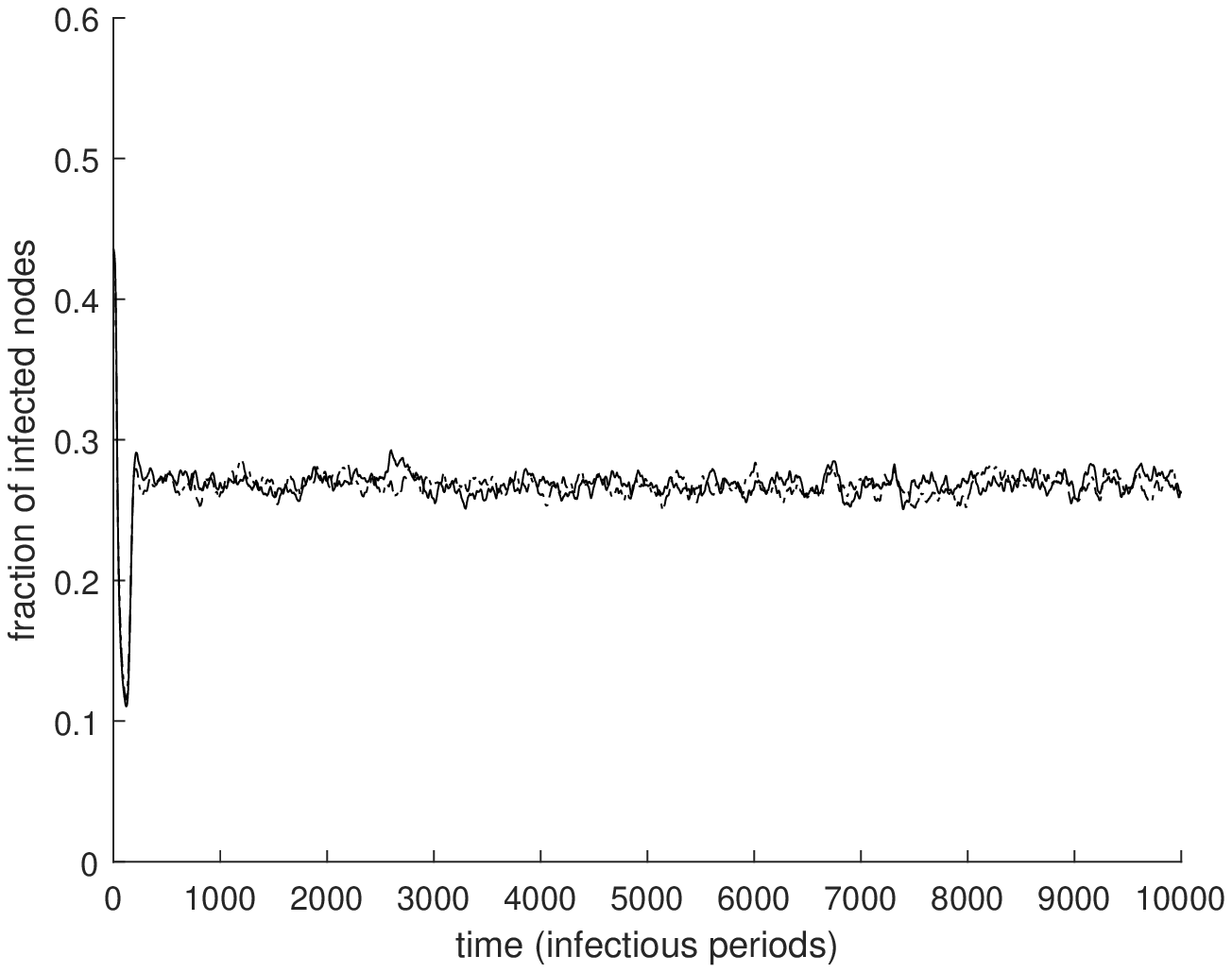}  
\end{subfigure}
\begin{subfigure}{.33\textwidth}
  \centering
  \includegraphics[scale=.4]{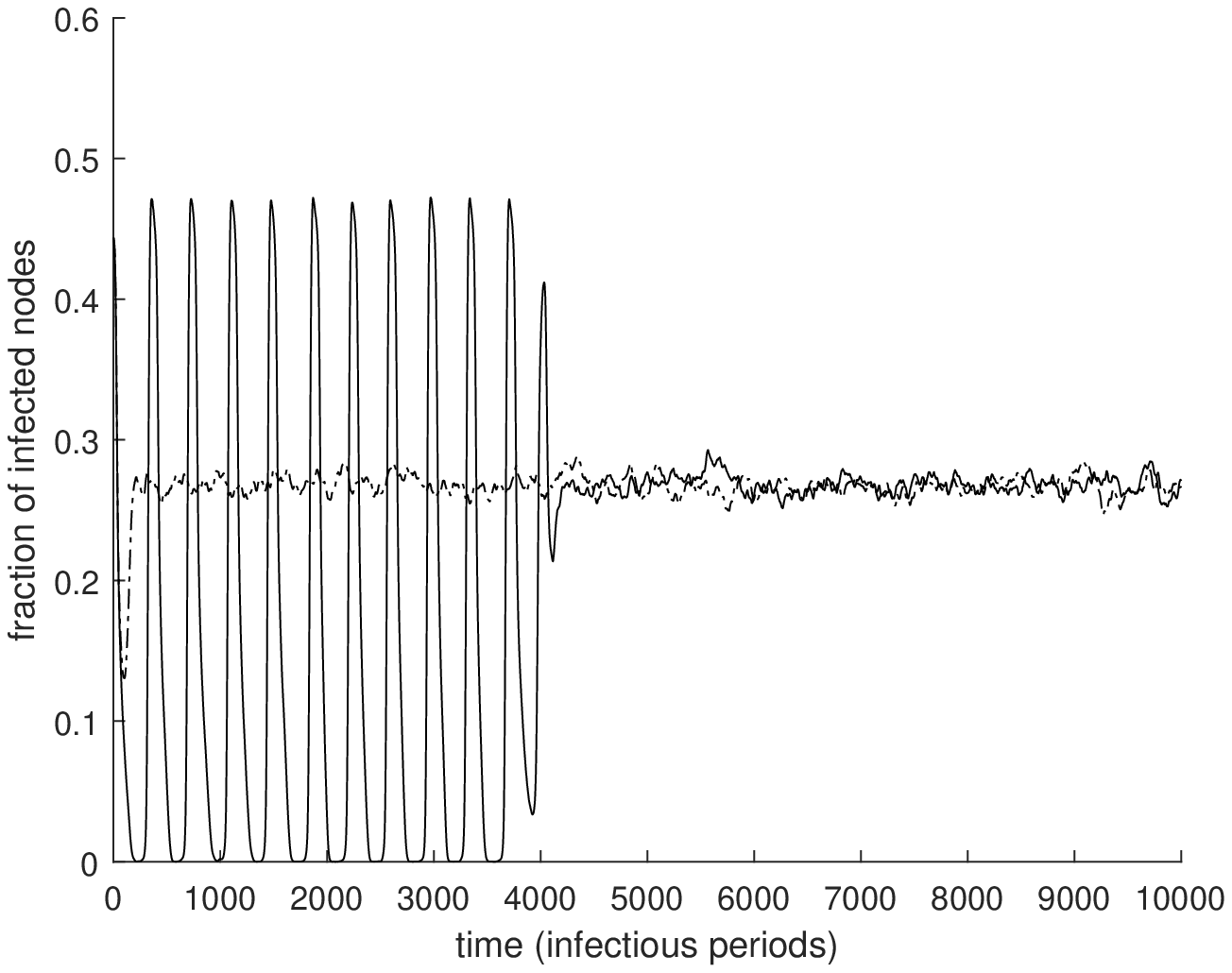}  
\end{subfigure}
\begin{subfigure}{.33\textwidth}
  \centering
  \includegraphics[scale=.4]{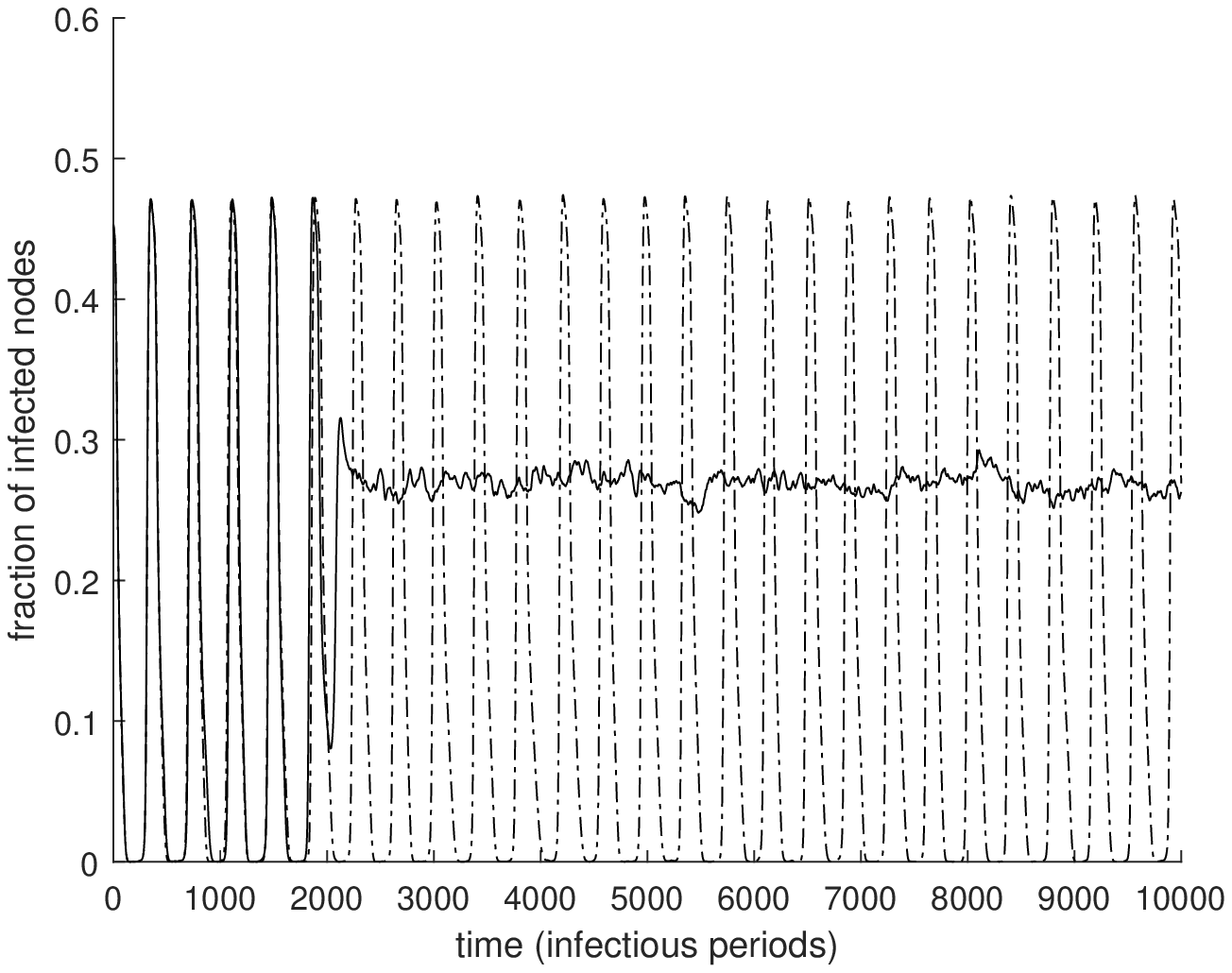}  
\end{subfigure}\\
\begin{subfigure}{.33\textwidth}
  \centering
  \includegraphics[scale=.4]{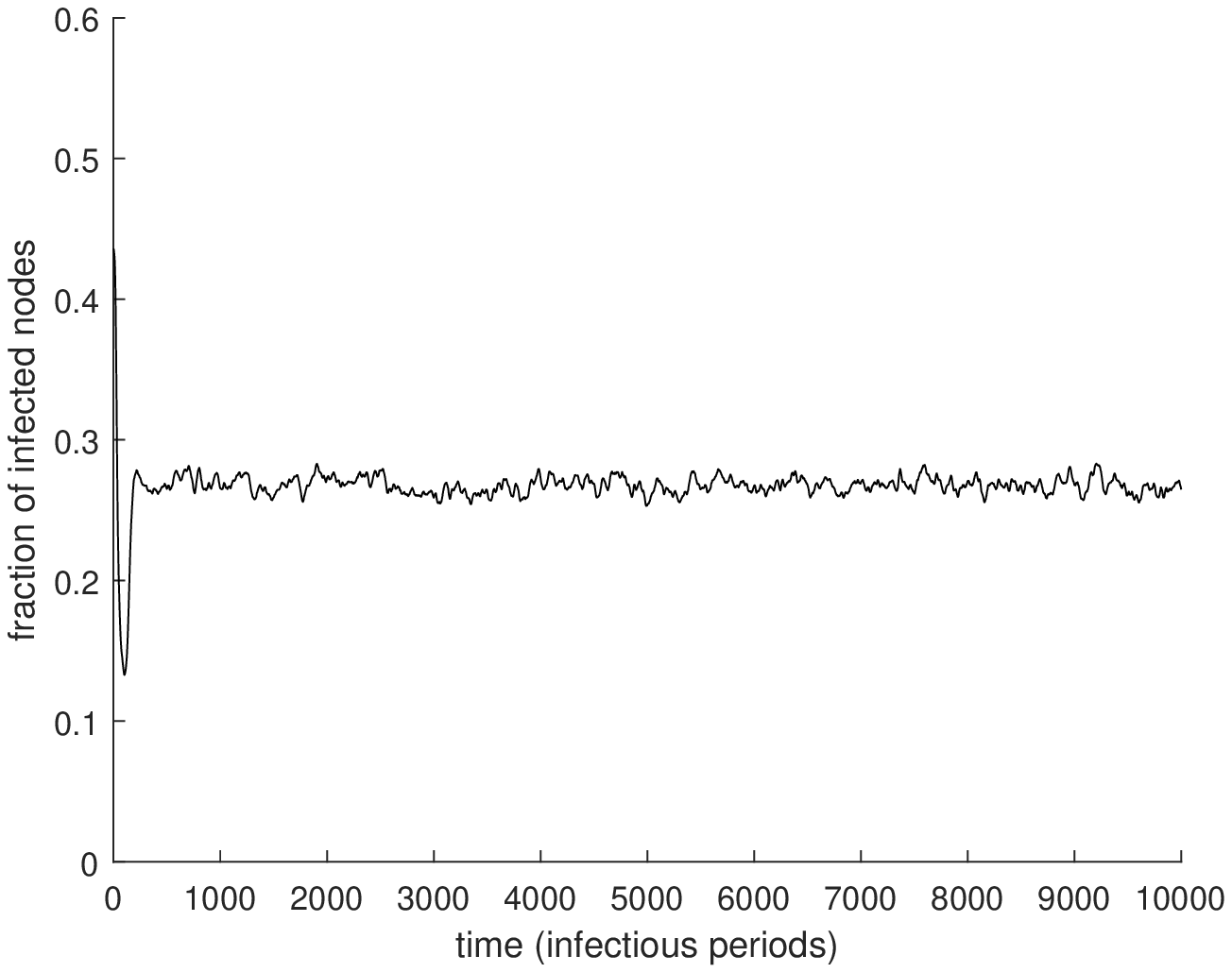}  
\end{subfigure}
\begin{subfigure}{.33\textwidth}
  \centering
  \includegraphics[scale=.4]{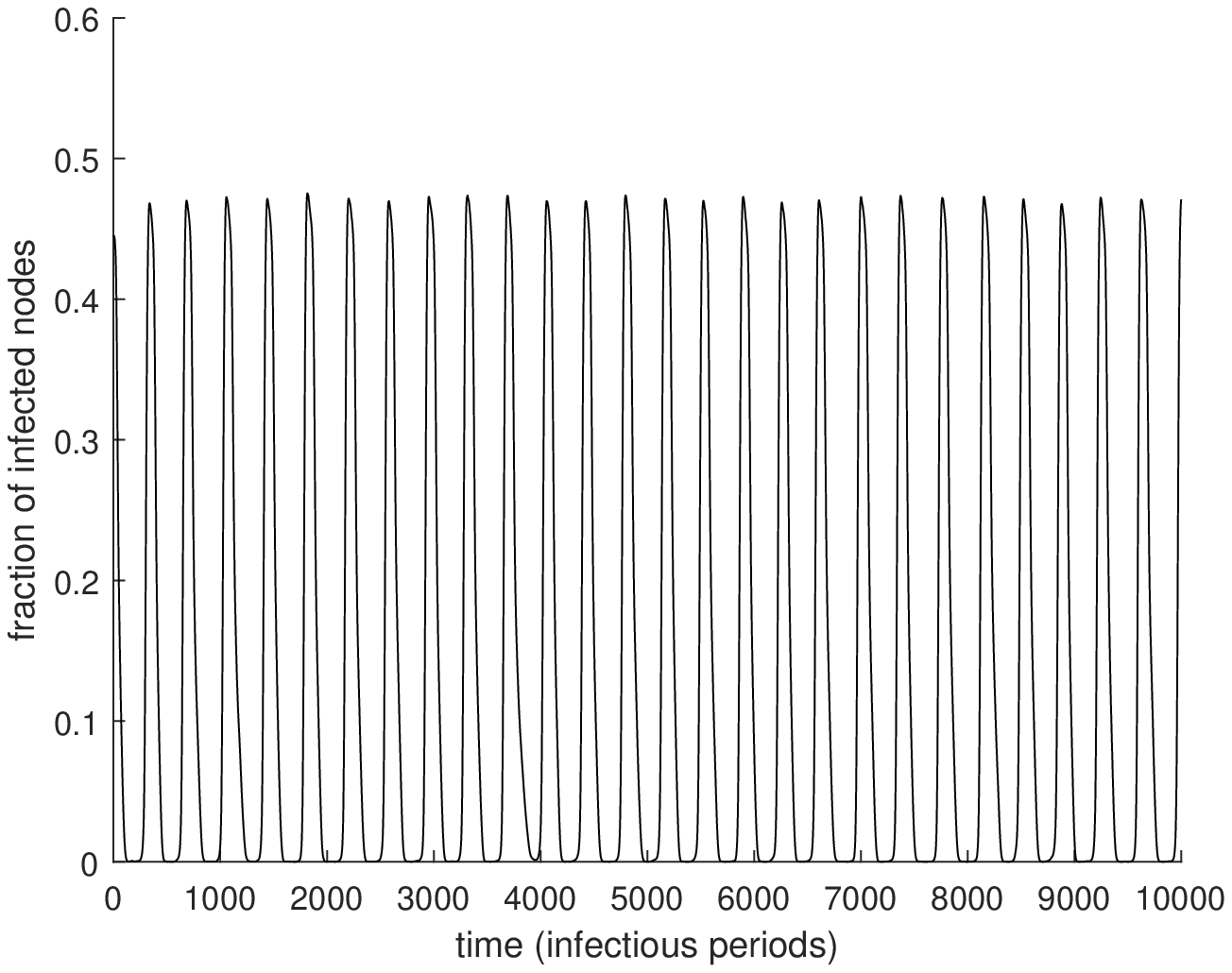}
\end{subfigure} 
\begin{subfigure}{.33\textwidth}
  \centering
  \includegraphics[scale=.4]{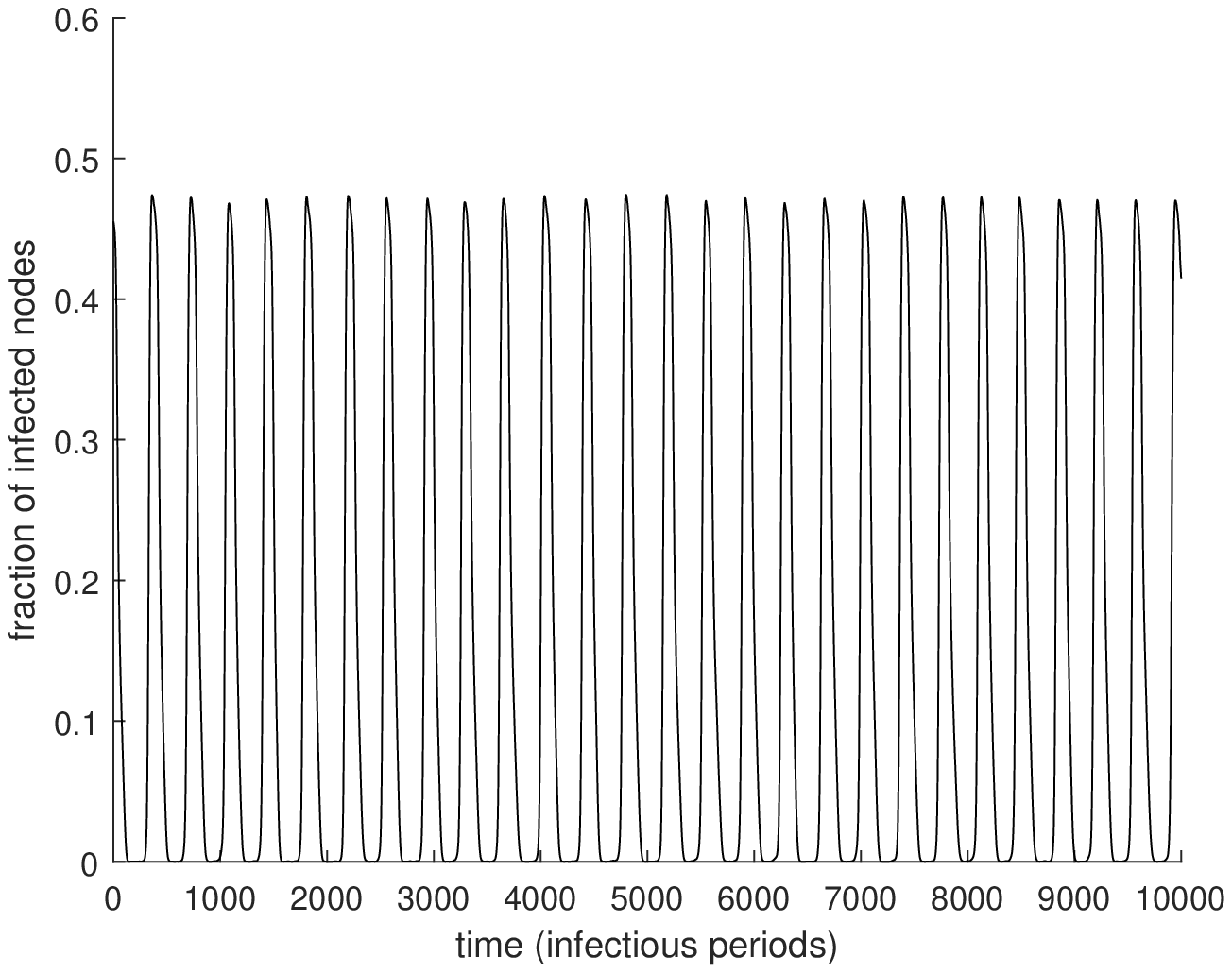}
\end{subfigure} 
\caption{On top, two outputs from GA with parameter $m=17$ and initial conditions $a_0=0.033240$, $i_0=0.437330$, $u_0=0.039240$ (left), $a_0=0.033240$, $i_0=0.447330$, $u_0=0.025900$ (center),  $a_0=0.033240$, $i_0=0.457330$, $u_0=0.012570$ (right). On bottom, a single output from GA with parameter $m=18$ and initial conditions $a_0=0.033240$, $i_0=0.437420$, $u_0=0.039140$ (left), $a_0=0.033240$, $i_0=0.447420$, $u_0=0.025810$ (center),  $a_0=0.033240$, $i_0=0.457420$, $u_0=0.012480$ (right). Parameters of the simulation: $N=10^{5}$.}
\label{fig:experiments_simus_m_gran}
\end{figure}

The previous discussion motivates to choose a more useful representation of the different realizations (see \cite{Aguiar} for a similar approach). To this end, and with the aim of producing an stochastic bifurcation diagram, for each parameter $m$ and each initial condition we represent the maximum amplitude of each individual realization of the experiment in the time interval $[1000,10000]$. For instance, the maximum amplitude of approximately $1000$ different realizations is shown in Figure~\ref{fig:bifurcation_diagram_simus}. We omit the first part of the realization seeking for stationarity of the time series. On the left panel, a transition can be noticed around the parameter value $m=4$, where amplitudes pass from $0.45$ to nearly neglectible (around $0.1$). On the right panel, a similar situation occurs near $m=16$. These two values are close to the bifurcation points $m_{sn1}\approx 3.761$ and $m_{sn2}\approx 16.057$ numerically computed for the system~\eqref{eqn:SAUIS}. However, a clear difference on the abruptness of the transition can be appreciated between the two bifurcation points. On the right-hand panel, the bistability scenario is clearly representated and only a small amount of realizations near $m_{sn2}$ differ from their amplitudes. However, in the left-hand panel the change in the amplitude seems to be more continuous. This happens for two reasons. The first one is, as usual, related to the number of nodes. In Figure~\ref{fig:bifurcation_diagram_simus_zoom} a zoom near $m=3.8$ is given, showing two panels with $N=10^5$ (left) and $N=10^6$ (right). We note that the region of bistability is narrower and closer to $m_{sn1}$ in the right-hand panel. Therefore we can expect a better diagram as $N$ increases. The second reason is dynamical. First, the region of bistability in the parameter space for $m_h<m<m_{sn1}$ is really small compared with $m>m_{sn2}$. Second, the stability of the limit cycles is also very different. For $m_h<m<m_{sn1}$ the unstable limit cycle has very weak repulsion as shown in the left panel of Figure~\ref{fig:lyapunov_exp}. As a consequence, the dynamics near the unstable limit cycle can be misunderstood as stochastic periodic solutions due to the slow decay of the amplitude (even small stochastic perturbations may counter the decay). This is clearly representated in Figure~\ref{fig:retrats_fase}. On the other hand, for $m>m_{sn2}$ the repulsion of the unstable limit cycle is very strong as shown in the right-hand panel of Figure~\ref{fig:lyapunov_exp} and this makes the realizations either stay on the stable limit cycle or rapidly tend to the equilibrium, as shown in Figure~\ref{fig:experiments_simus_m_gran}.

\begin{figure}[ht]
\begin{subfigure}{.49\textwidth}
    \centering
    \includegraphics[scale=0.55]{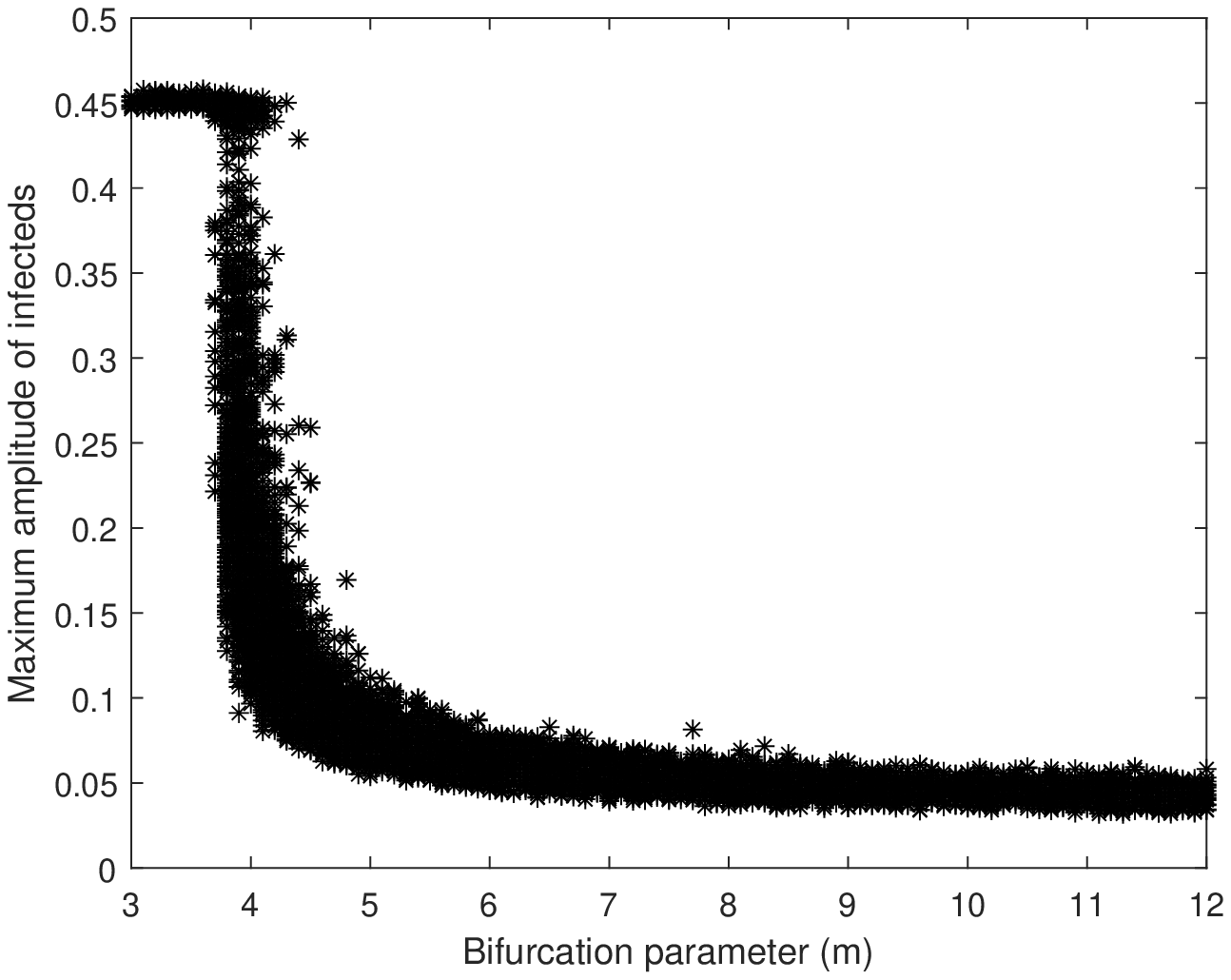}
\end{subfigure}
\begin{subfigure}{.49\textwidth}
    \centering
    \includegraphics[scale=0.55]{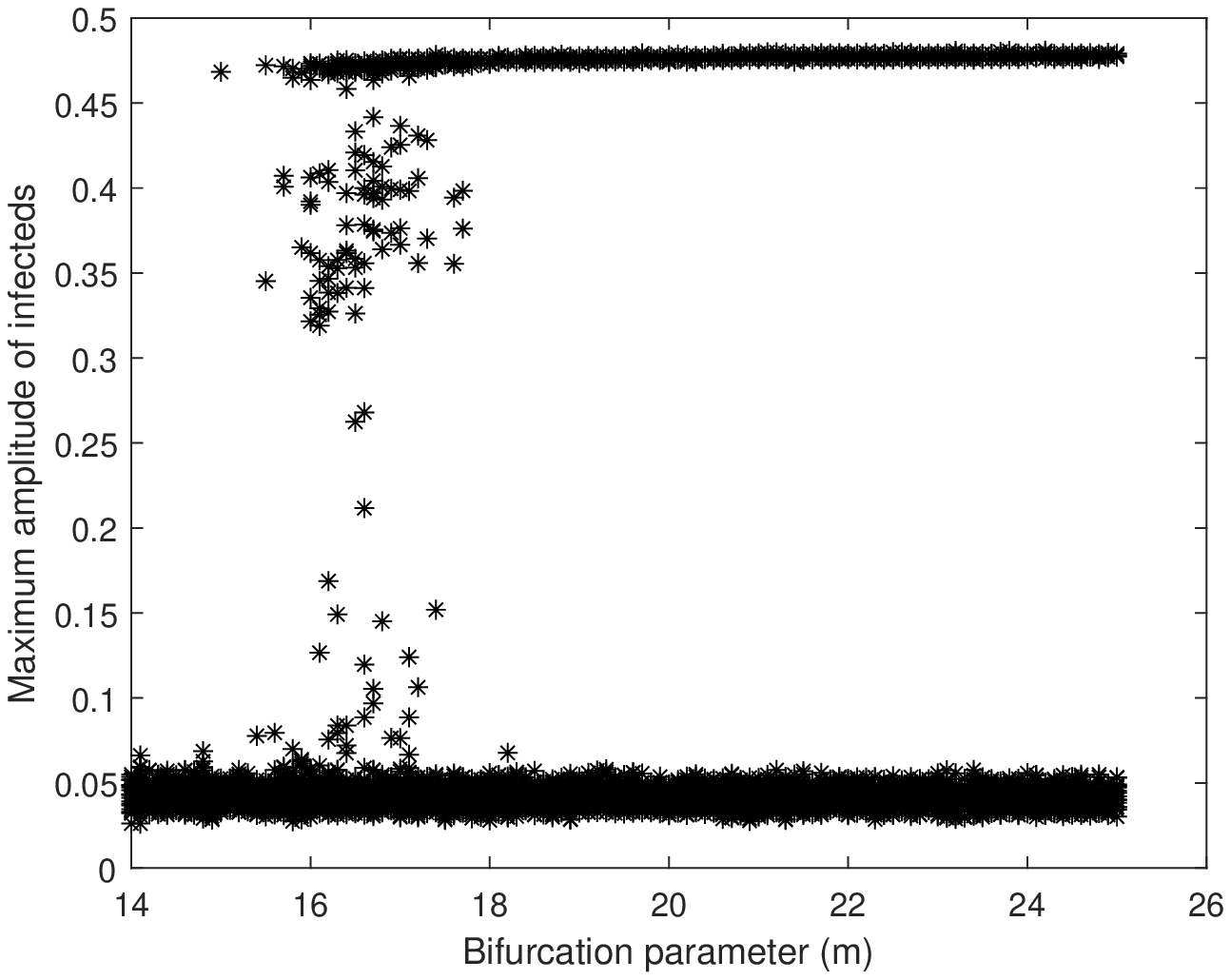}
\end{subfigure}
    \caption{Maximum amplitude of each GA output in the time interval $[1000,10000]$. Parameters of the simulation: $N=10^{5}$.}
    \label{fig:bifurcation_diagram_simus}
\end{figure}

\begin{figure}[ht]
\begin{subfigure}{.49\textwidth}
    \centering
    \includegraphics[scale=0.55]{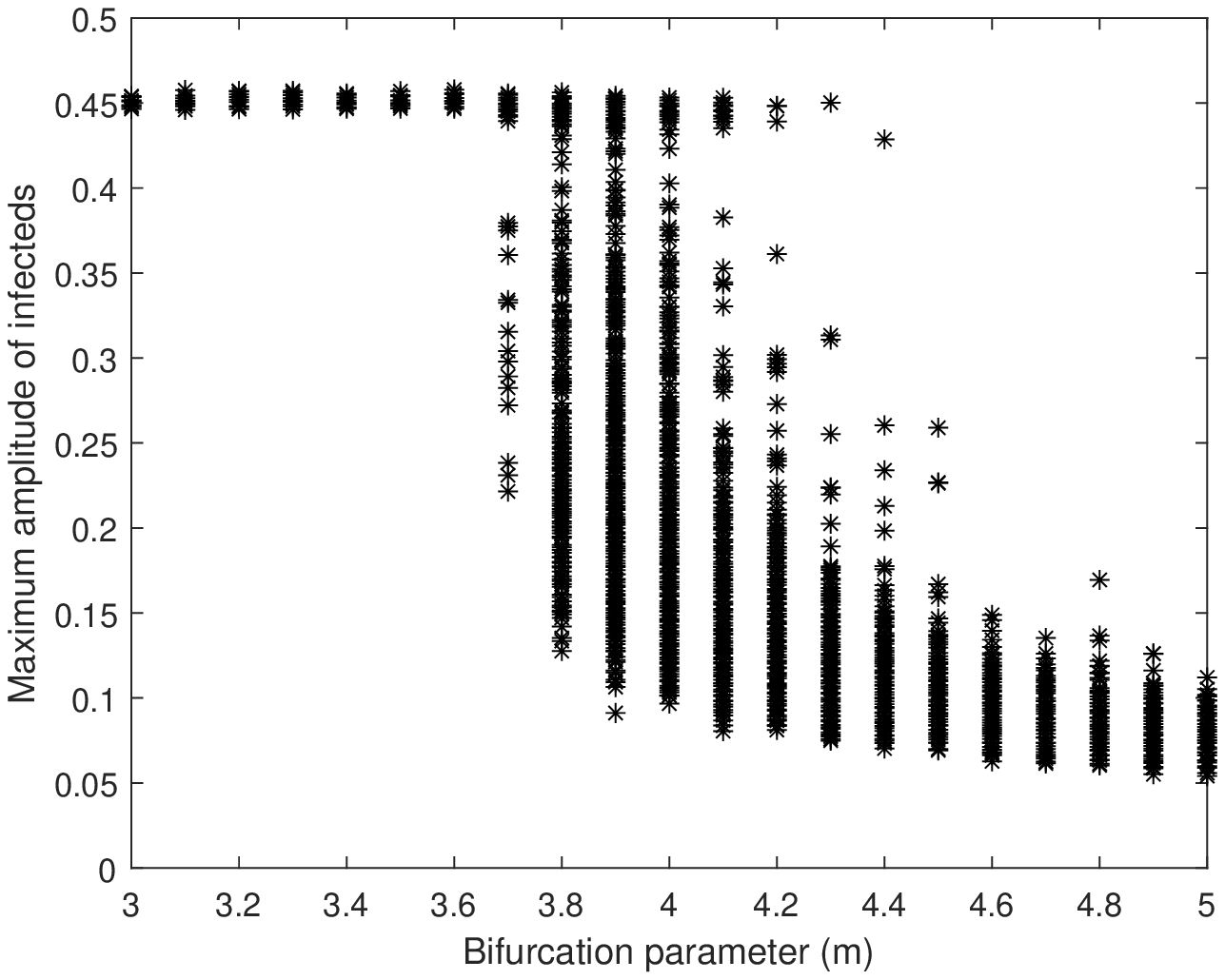}
\end{subfigure}
\begin{subfigure}{.49\textwidth}
    \centering
    \includegraphics[scale=0.55]{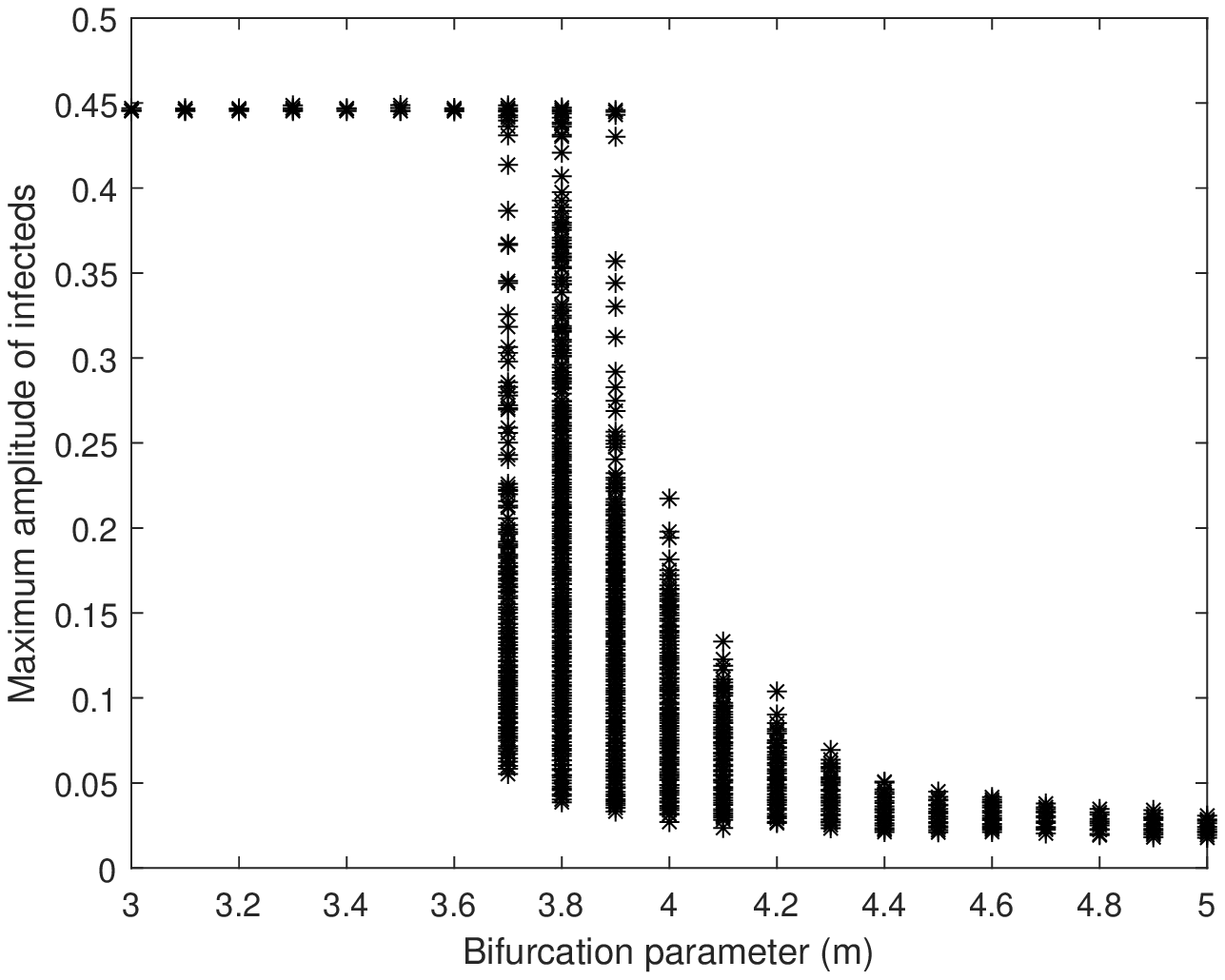}
\end{subfigure}
    \caption{Maximum amplitude of each GA output in the time interval $[1000,10000]$. Parameters of the simulation: $N=10^{5}$ (left). $N=10^{6}$ (right).}
    \label{fig:bifurcation_diagram_simus_zoom}
\end{figure}

\section{Conclusions}

The existence of a high number of infectious cases in a population during an epidemic can modify our individual behaviour and how we relate to others. In turn, behavioural changes modify the  epidemic spread itself. This interplay has long been considered in many papers dealing with classic compartmental models with incidence functions depending on the disease prevalence in a nonlinear way (\cite{Alexander,Liu,Liu2,Ruan}). An alternative approach is based on the addition of new compartments for alerted/responsive individuals that have reduced transmision rates \cite{Funk09,Funk10,Kiss10,Sahneh12a,Szabo}. 

In this paper, we combine both approaches by considering an epidemic model without demography which includes two types of aware individuals who are distinguished by their willingness to convince susceptible individuals to adopt preventive measures. For the model with constant rates, we know that oscillatory solutions can appear as a consequence of a supercritical Hopf bifurcation from the endemic equilibrium \cite{JRS2}. Now, assuming that the rate of alerting decay $\delta_a$ as well as the rate of creation of new unwilling individuals $\nu_a$ decrease by a nonlinear reduction factor $\sigma_m(i) \in (0,1]$ as the prevalence $i$ of the disease increases, we have shown the existence of two scenarios where a bistable configuration with a stable limit cycle and a stable endemic equilibrium occur. Precisely, we assume an abrupt change of both rates when disease prevalence crosses a threshold value $\eta$ (the half-saturation constant). Below this threshold, the value of the rates are close to their maximum values ($\delta_a$ and $\nu_a$), whereas they clearly decrease above it. The sharpness of this change is controlled by a parameter $m$ which determines the slope of $\sigma_m(i)$ at $i=\eta$ ($\sigma'_m(\eta)=-m/(4\eta)$). In both scenarios, the parameters values are in agreement with the sufficient conditions we have obtained for the existence of, at least, one endemic equilibrium.     

Values of $m \gg 1$ can be associated with radical changes in the self-initiated individual behaviour when the prevalence level is close to $\eta$. In this case, $\sigma_m(i) \approx 0$ for $\eta < i \le 1$ which implies almost no decay of awareness and almost no creation of unwilling individuals (only fully aware individuals are created).  For the parameters considered in the paper, a bistable configuration is always the case for $m > 16.06$ ($\sigma'_m(\eta) < -10$) after the occurrence of a saddle-node bifurcation of limit cycles. This configuration is clearly observed in the stochastic simulations of the epidemic process with a very low rate of imported cases due to the strongly repulsive character of the unstable limit cycle lying between the stable one and the endemic euilibrium. For lower values of $m$ (here $m < 4$), the reduction of both rates is not so abrupt and bistability is only present for a narrow range of values ($3.698 < m < 3.761$ with $\sigma'_m(\eta) \approx -2.3$) once a subcritical Hopf bifurcation has occurred. In this case, $\sigma_m(i)$ is clearly positive for $\eta < i \le 1$. For $m < 3.698$, the smoothness of the transition between high and low values of $\sigma_m$ ($\sigma'_m(\eta) \in [-2.3, -0.625]$) as well as the lower reduction of the two rates for $i \approx 1$ make the endemic equilibrium unstable and allow for a stable limit cycle.     

The existence of imported cases (at a rate $\epsilon$) assumed in the present work has also been considered elsewhere (see, for instance, \cite{Aguiar, AlmCon2019}). In addition to its suitability when modelling epidemics in non-isolated populations, it prevents the stochastic extinction of oscillatory epidemics when disease prevalence reaches very low levels. From a deterministic point of view, as long as $\epsilon$ is small enough, the continuous dependence of solutions on parameter values guarantees that the attractors of the model with and without imported cases will be very close to each other (see Figure \ref{fig:bifurcation_diagram}).     

In summary, we have shown the existence of bifurcations of limit cycles in the SAUIS model (without demography) when epidemic spread and awareness transmission are coupled through nonlinear rates that depend on the prevalence level in a population. This has been obtained under a choice of parameters values which assumes a much faster transmission among susceptible individuals of both infections and low level of awareness  than the creation of fully aware individuals. These dynamics are also observed in stochastic simulations of the model assuming a (very low) rate of imported cases with large enough population sizes.

\section*{Acknowledgments}
This work is supported by the grants PID2019-104437GB-I00 and PID2020-118281GB-C31 funded by MCIN/AEI/10.13039/501100011033. D.J. and J.S. are respectively members of the \emph{Consolidated Research Groups} 2017 SGR 1617 and 2017 SGR 01392 of the \emph{Generalitat de Catalunya}. D.R. is a Serra H\'unter Fellow.

\section*{Appendix}

In this Appendix we present a modified version of the Poincaré-Miranda theorem on the plane, which is an extended version of the classical Bolzano's theorem in higher dimension. Up to the authors knowledge, Poincaré-Miranda theorem is not trivially deduced in a planar triangular domain. We credit and thank professor Rafael Ortega for the idea of the proof, which relies on the following result of degree theory that we include for the sake of completeness (see the Appendix on degree theory in \cite{Ortega}.)

Let $\Gamma$ be a Jordan curve in $\mathbb{R}^2$ and let $\Omega$ be the open set enclosed by $\Gamma$. Let $f:\bar\Omega\rightarrow\mathbb{R}^2$ be continuous such that $f(x)\neq 0$ for all $x\in\partial\Omega=\Gamma$. The degree of $f$ in $\Omega$, $\text{deg}(f,\Omega)$, can be computed as the winding number of $f(\Gamma)$ around the origin. 

\begin{thm}\label{thm:degree}
Let $f:\bar\Omega\rightarrow\mathbb{R}^2$ be continuous such that $f(x)\neq 0$ for all $x\in\partial\Omega$. If $\text{deg}(f,\Omega)\neq 0$ then $f$ has a zero in $\Omega$.
\end{thm}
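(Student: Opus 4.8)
The plan is to prove the contrapositive: assuming $f$ has no zero in $\Omega$, we will show $\text{deg}(f,\Omega)=0$. Since by hypothesis $f\neq 0$ on $\partial\Omega=\Gamma$ as well, this assumption makes $f$ zero-free on the whole compact set $\bar\Omega$, so the task reduces to showing that for any continuous $f\colon\bar\Omega\to\mathbb{R}^2\setminus\{0\}$ the winding number of $f(\Gamma)$ about the origin is $0$.

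First I would replace $\bar\Omega$ by a standard disk. Because $\Gamma$ is a Jordan curve and $\Omega$ is the bounded region it encloses, the Jordan--Schoenflies theorem yields a homeomorphism $h\colon\bar D\to\bar\Omega$ of the closed unit disk $\bar D\subset\mathbb{R}^2$ onto $\bar\Omega$ with $h(S^1)=\Gamma$. Setting $\tilde f:=f\circ h\colon\bar D\to\mathbb{R}^2\setminus\{0\}$, which is continuous and nowhere zero, the loop $\tilde f|_{S^1}$ traces the curve $f(\Gamma)$ up to reparametrization, so $\text{deg}(f,\Omega)$ equals the winding number of $\tilde f|_{S^1}$ about $0$ (with a sign that is irrelevant to its vanishing).

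Next I would contract radially: define $G\colon S^1\times[0,1]\to\mathbb{R}^2\setminus\{0\}$ by $G(z,s)=\tilde f(sz)$, which is well defined because $sz\in\bar D$ and $\tilde f$ never vanishes, and which is continuous. As $G(\cdot,1)=\tilde f|_{S^1}$ and $G(\cdot,0)\equiv\tilde f(0)$ is a constant loop, $G$ is a free homotopy in $\mathbb{R}^2\setminus\{0\}$ from $\tilde f|_{S^1}$ to a constant; by the homotopy invariance of the winding number, the winding number of $\tilde f|_{S^1}$ about $0$ is that of a constant loop, namely $0$. Hence $\text{deg}(f,\Omega)=0$, and contrapositively $\text{deg}(f,\Omega)\neq 0$ forces a zero of $f$ in $\Omega$.

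The two substantive inputs are the Jordan--Schoenflies theorem and the homotopy invariance of the winding number; I expect the only real friction to be bookkeeping, namely verifying that $h$ carries $\partial\Omega$ onto $S^1$ with a consistent orientation so that the two winding numbers genuinely coincide. One could instead nullhomotope $\Gamma$ directly inside $\bar\Omega$ using path-connectedness, but since $\bar\Omega$ need not be convex, transporting the whole problem to the disk is the cleaner route.
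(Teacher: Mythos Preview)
Your argument is correct: the contrapositive together with Jordan--Schoenflies and the radial contraction on the disk gives a clean proof that a nowhere-vanishing $f$ on $\bar\Omega$ has $\deg(f,\Omega)=0$. The orientation issue you flag is harmless, since a possible sign flip does not affect the conclusion that the degree vanishes.

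However, there is nothing to compare against: the paper does not prove this statement. It is included in the Appendix explicitly ``for the sake of completeness'' as a quoted result from degree theory, with a reference to Ortega's book, and is then used as a black box in the proof of the triangular Poincar\'e--Miranda theorem that follows. So your proposal supplies a self-contained justification where the paper simply cites the literature; that is a genuine addition rather than an alternative to an existing argument.
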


Now we state the version of Poincaré-Miranda theorem on a tringular domain, which can be easily generalised for any Jordan curve with similar assumptions.

\begin{figure}
    \centering
    \includegraphics{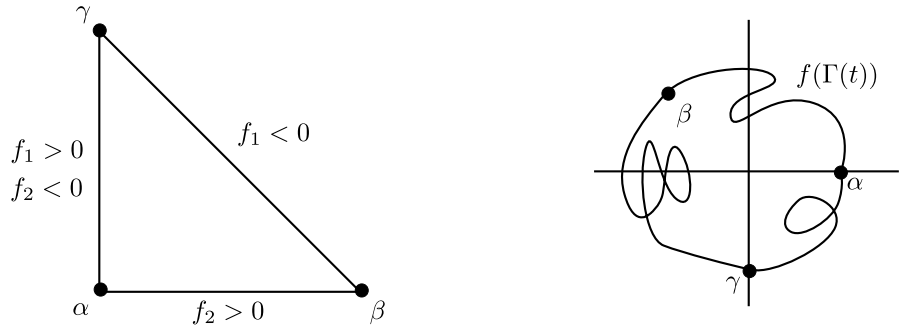}
    \caption{On the left, structure and assumptions of the Poincaré-Miranda theorem for triangles in Theorem~\ref{thm:poincare-miranda}. On the right, an sketch of the behaviour of the function $f(\Gamma(t))$ for $t\in[0,1]$.}
    \label{fig:poincare-miranda}
\end{figure}

\begin{thm}\label{thm:poincare-miranda}
Let $\Omega$ be a triangle and let $f:\mathbb{R}^2\rightarrow\mathbb{R}^2$ be a continuous function, $f(x,y)=(f_1(x,y),f_2(x,y))$. Consider the boundary of $\Omega$ positively oriented and three distinguished points $\alpha$, $\beta$ and $\gamma$ as showed in Figure~\ref{fig:poincare-miranda}. If $f_2>0$ from $\alpha$ to $\beta$, $f_1<0$ from $\beta$ to $\gamma$ and $f_1>0$ and $f_2<0$ from $\gamma$ to $\alpha$, then $f(x,y)$ has at least one zero in $\Omega$.
\end{thm}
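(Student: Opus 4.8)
The plan is to deduce Theorem~\ref{thm:poincare-miranda} from Theorem~\ref{thm:degree} by showing that, under the stated sign conditions on the three edges, the degree $\text{deg}(f,\Omega)$ is non-zero — in fact equal to $\pm 1$ — so that $f$ must vanish somewhere in $\Omega$. First I would check the hypothesis of Theorem~\ref{thm:degree}: on the edge from $\alpha$ to $\beta$ we have $f_2>0$ (hence $f\neq 0$), on the edge from $\beta$ to $\gamma$ we have $f_1<0$ (hence $f\neq 0$), and on the edge from $\gamma$ to $\alpha$ we have $f_1>0$ and $f_2<0$ (hence $f\neq 0$); so $f$ is zero-free on $\partial\Omega=\Gamma$ and the degree is well-defined as the winding number of $f(\Gamma)$ about the origin.

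Next I would compute that winding number. Parametrise $\Gamma$ positively by $\Gamma(t)$, $t\in[0,1]$, tracing $\alpha\to\beta\to\gamma\to\alpha$, and track in which of the four open quadrants $Q_{++},Q_{-+},Q_{--},Q_{+-}$ (or on which coordinate half-axis) the point $f(\Gamma(t))$ lies. On the first edge $f_2>0$, so $f(\Gamma(t))$ stays in the open upper half-plane $\{y>0\}=Q_{++}\cup Q_{-+}$. On the second edge $f_1<0$, so it stays in the open left half-plane $\{x<0\}=Q_{-+}\cup Q_{--}$. On the third edge $f_1>0$ and $f_2<0$, so it sits in the open fourth quadrant $Q_{+-}$. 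The key observation is that the image curve therefore visits the half-planes in the cyclic order upper $\to$ left $\to$ (lower-right), and to pass from the open upper half-plane to the open left half-plane it must cross the negative $x$-axis (at the image of $\beta$ or near it), while to get from the left half-plane into the open fourth quadrant it must cross the negative $y$-axis, and to close up from $Q_{+-}$ back to the upper half-plane at $f(\Gamma(1))=f(\alpha)$ it must cross the positive $x$-axis. Since $f(\alpha)$ lies on the boundary between the third edge's constraint and the first edge's constraint, $f(\alpha)\in\overline{Q_{+-}}\cap\{y>0\text{-closure}\}$ forces $f(\alpha)$ onto the positive $x$-axis, consistent with a single negatively- (or positively-) oriented loop. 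Counting these three mandatory axis crossings in a consistent rotational sense shows $f(\Gamma)$ encircles the origin exactly once, so $\text{deg}(f,\Omega)=\pm1\neq 0$.

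I would make this rigorous by the standard quadrant/argument-increment bookkeeping: the total change of $\arg f(\Gamma(t))$ over $[0,1]$ is a sum of the increments over the three edges; on each edge the image is confined to a half-plane or a quadrant (an angular sector of opening $<\pi$), so the argument increment over each edge is constrained to an interval of length $<\pi$, hence the only multiple of $2\pi$ compatible with the three crossings described above is $\pm 2\pi$. Then Theorem~\ref{thm:degree} gives a zero of $f$ in $\Omega$. Finally I would remark that the triangle plays no special role: the same argument applies verbatim whenever $\Gamma$ is a Jordan curve split into three arcs carrying these sign conditions, giving the promised generalisation.

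The main obstacle I anticipate is the bookkeeping at the three \emph{vertices} $\alpha,\beta,\gamma$, where two sign conditions meet and the image point lands on a coordinate axis rather than strictly inside a quadrant: one must argue that $f(\beta)$ being on the closed negative $x$-axis (the common boundary of the ``upper half-plane'' and ``left half-plane'' regions) together with the open conditions just before and just after $\beta$ genuinely forces the curve to \emph{cross} that axis (transversally in the winding-number sense) rather than merely touch it, and similarly at $\gamma$ and $\alpha$. Handling these degenerate contact points carefully — e.g. by a small perturbation of $f$ that is still zero-free on $\Gamma$ and homotopic to $f$, or by directly invoking that the winding number only depends on the sequence of quadrants visited — is where the proof needs care; the degree-theory input of Theorem~\ref{thm:degree} does the rest.
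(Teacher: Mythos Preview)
Your proposal is correct and follows essentially the same route as the paper: verify $f\neq 0$ on $\partial\Omega$, then compute the winding number of $f\circ\Gamma$ about the origin by tracking in which half-plane or quadrant the image lies along each of the three boundary arcs, conclude the winding number is $\pm 1$, and invoke Theorem~\ref{thm:degree}. The paper's proof is slightly terser (it simply asserts the sequence of quadrants visited and reads off winding number $1$) and does not dwell on the vertex bookkeeping you flag, but the argument is the same.
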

\begin{proof}
Let us assume, with aim of reaching contradiction, that $f$ does not vanish on $\Omega$. Let $\Gamma:[0,1]\rightarrow \mathbb{R}^2$ be a curve travelling the boundary of $\Omega$ with $\Gamma(0)=\Gamma(1)=\alpha$. Notice that, by hypothesis, $f$ does not vanish on the boundary of $\Omega$. So, we can consider a continuous argument function $\theta:[0,1]\rightarrow\mathbb{R}$ such that
\[
f(\Gamma(t))=|f(\Gamma(t))|\left(\cos\theta(t),\sin\theta(t)\right),
\]
which is unique up to an additive constant $2\pi$. By the hypotheses of the theorem, $f(\Gamma(t))$ lies on the first and second quadrants when travelling from $\alpha$ to $\beta$, and $f(\beta)$ lies on the second quadrant. Similarly, from $\beta$ to $\gamma$, $f(\Gamma(t))$ lies on the second and third quadrants, finishing on the negative ordinate semi-axis. Finally, from $\gamma$ to $\alpha$, $f(\Gamma(t))$ liess on the fourth quadrant ending on the positive abscissa semi-axis where it started (see Figure~\ref{fig:poincare-miranda}). Therefore, the winding number of the curve $f(\Gamma(t))$ is $\frac{\theta(1)-\theta(0)}{2\pi}=1$. By the definition of the degree, $\text{deg}(f,\Omega,0)=1$ and then, by Theorem~\ref{thm:degree}, $f$ has at least one zero in $\Omega$.
\end{proof}

\end{document}